\newtheorem{theorem}{Theorem}
\newtheorem{lemma}{Lemma}
\newtheorem*{remark}{Remark}
\begin{document}
\title{Finding the Exact Distribution of \\ (Peak) Age of Information for Queues of $PH/PH/1/1$ and $M/PH/1/2$ Type}
\author{
    Nail~Akar\\
	Electrical and Electronics Engineering Dept.\\
	Bilkent University\\
	Ankara, Turkey \\
	\texttt{akar@ee.bilkent.edu.tr} \\
	\And
    Ozancan~Doğan\\
    Electrical and Electronics Engineering Dept.\\
	Bilkent University\\
    Ankara, Turkey \\
    \texttt{ozancan@ee.bilkent.edu.tr} \\
    \And
    Eray~Ünsal~Atay\\
    Electrical and Electronics Engineering Dept.\\
	Bilkent University\\
    Ankara, Turkey \\
    \texttt{unsal.atay@ug.bilkent.edu.tr} \\ }

\maketitle
\begin{abstract}
 Bufferless and single-buffer queueing systems have recently been shown to be effective in coping with escalated Age of Information (AoI) figures arising in single-source status update systems with large buffers and FCFS scheduling.
In this paper, for the single-source scenario, we propose a numerical algorithm for obtaining the exact distributions of both the AoI and the peak AoI (PAoI) in (i) the bufferless $PH/PH/1/1/P(p)$ queue with probabilistic preemption with 
preemption probability $p$, $0 \leq p \leq 1$, and (ii) the single buffer $M/PH/1/2/R(r)$ queue with probabilistic replacement of the packet in the queue by the new arrival with replacement probability
$r$,  $0 \leq r \leq 1$.
The proposed exact models
are based on the well-established theory of Markov Fluid Queues (MFQ) and the numerical algorithms are matrix-analytical and they rely on numerically stable and efficient vector-matrix operations. Moreover, the obtained exact distributions are in matrix exponential form, making it amenable to calculate the tail probabilities and the associated moments straightforwardly. Firstly, we validate the accuracy of the proposed method with simulations, and for sume sub-cases, with existing closed-form results. 
We then comparatively study the AoI performance of the queueing systems of interest under varying traffic parameters. 
\end{abstract}
\keywords{Age of Information \and Peak Age of Information \and PH-type distribution \and Markov fluid queues}
\section{Introduction}
\label{intro}
Consider the information update system in Fig.~\ref{fig:multisource} consisting of one information source equipped with a sensor, a server local to the source in the form of a queue, and a remote monitor (or destination).
The state of the information source is assumed to change in time which is detected by its sensor and the information source occasionally generates packets that contain sensed data along with a time stamp, to be immediately forwarded to the server. The server's role is to make decisions on when and which of these packets are to be sent towards the monitor which is responsible of collecting, monitoring, and further processing of these update messages. Packets are sent by the server to the monitor via a network which introduces random delays, i.e., service time of packets, and the monitor immediately sends back positive acknowledgments to the server. This single source model also studied in several other references including \cite{kaul_etal_infocom12},\cite{kosta_etal},\cite{kosta_etal_survey},\cite{inoue_etal_tit19} is the focus of the current paper.
The Age of Information (AoI) of a source is defined as the time elapsed since the generation of the last successfully received update packet at the monitor. 
The AoI concept was first introduced in \cite{kaul_etal_SMAN11} and  later elaborated in \cite{kaul_etal_infocom12},\cite{kaul_etal_ciss12} as a metric to quantify the freshness of knowledge about the status of a remote information source in a status update system. In these studies, the update system described above is viewed as an abstraction of real-world scenarios and applications in which data freshness is crucial. 
There has recently  been a surge of interest on AoI-related problems in various contexts including development of analytical models for AoI \cite{inoue_etal_tit19},\cite{costa_etal_TIT16},\cite{chen_huang_isit16}, AoI optimization methods \cite{huang_modiano},\cite{arafa_ulukus_asilomar17},\cite{sun_etal_tit17}, and AoI scheduling mechanisms \cite{hsu_etal_isit17},\cite{he_etal_TIT18}. The reference \cite{kosta_etal_survey} provides a relatively recent survey of the AoI concept and its applications.
The focus of the current paper is on the development of the queueing models of the particular single-source, single-server scenario, and the scenarios involving multiple information sources sharing a single server \cite{yates_kaul_ISIT12}, multiple servers \cite{bedewy_etal_isit16}, non-zero acknowledgment delays, and packet errors introduced by the network \cite{chen_huang_isit16}, are deliberately left outside the scope of
this paper.

\begin{figure}[tb]
	\centering
	\includegraphics[width=0.7\linewidth]{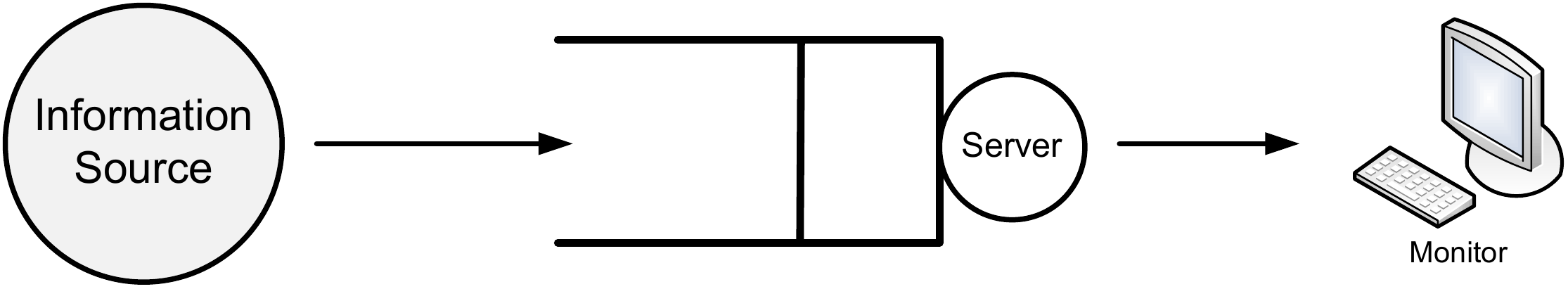}
	\caption{An information source sending status update messages through a queue to a remote monitor.} 
	\label{fig:multisource}
\end{figure}

The AoI processes of interest will first be introduced in the following very general setting not necessarily restricted to bufferless or single-buffer systems only.
Using certain buffer management and scheduling mechanisms, the server is to send (in their entirety) a fraction of these packets, named successful packets, either immediately upon arrival or after a queue wait, towards the monitor. In this general description, some packets, named unsuccessful packets, are to be dropped at the server after receiving partial service or without receiving any service at all.
Let $t_j$ ($t_j^{'})$ denote the arrival instant of the $j^{\text{th}}, j=1,2,\ldots,$ successful (unsuccessful) information packet arriving at the queue.
The interarrival time $\Lambda$ between information packet (successful or unsuccessful) arrival epochs is assumed to be independent and identically distributed with cumulative distribution function (cdf) $F_{\Lambda}(\cdot)$, 
probability density function (pdf) $f_{\Lambda}(\cdot)$, 
mean $1/\lambda$, and squared coefficient of variation (scov) $c_{\Lambda}^2 = \sigma_{\Lambda}^2 \lambda^2$ where $\sigma_{\Lambda}^2$ denotes the variance of $\Lambda$.
Let $\delta_j,j \geq 1$ denote the reception time at the monitor of the $j^{\text{th}}$ successful packet. 
Also, let $D_j,j \geq 1$ denote the system time of successfully received packet $j$, which is the sum of two terms: i) 
the queue wait time $W_j$, and ii) the service time $\Theta_j$, of the $j^{\text{th}}$ received packet. Clearly, $D_j= W_j + \Theta_j=\delta_j-t_j,j \geq 1$. In the current paper, all packet service times (denoted by $\Theta$) are assumed to be independent and identically distributed  with common cdf $F_{\Theta}(\cdot)$, pdf $f_{\Theta}(\cdot)$, mean $1/\mu$ and scov  $c_{\Theta}^2 = \sigma_{\Theta}^2 \mu^2$ where $\sigma_{\Theta}^2$ denotes the variance of $\Theta$.
The system load $\rho$ is defined as $\rho=\lambda/\mu$. 

Let $\Delta(t), t\geq 0,$ 
denote the continuous-time random process with left-continuous sample paths representing the AoI for this source at time $t$ with given $\Delta(0)$. After $t=0$, $\Delta(t)$ increases linearly in time with a unit slope until the first packet reception at $t=\delta_1$.
The right limit $\Delta(\delta_1^+)=\lim_{t \downarrow \delta_1} \Delta(t)$ is set to $D_1$ after which the process $\Delta(t)$ subsequently increases linearly in time with unit slope until the next reception and the pattern repeats forever. Let $\Phi_j = \Delta(\delta_j), j \geq 1,$ denote the discrete-time continuous-valued random process (called the PAoI process) associated with the AoI just at the epoch of packet receptions. 
Fig.~\ref{fig:SamplePath} shows a sample path of the random process $\Delta(t)$ for this general setting
with $\Delta(0)=0$ with five information packets arriving at a single buffer queue out of which the second packet is preempted by the third one. What is important to observe from Fig.~\ref{fig:SamplePath} is that each cycle of the AoI process consists of a linear curve that starts at value $D_j$ for some index $j$ and increases at a unit rate until the peak value $\Phi_{j+1}$ of that cycle and the sample path of the AoI process consists of an ordered concatenation of infinitely many cycles. The difficulty of building analytical models for these two AoI processes stems from the fact that the random variables $\Phi_j$ and $D_j$ are not independent and conventional queueing models do not cope well with jumps of this nature. For example, Markov Fluid Queues (MFQ) \cite{kulkarni_1997} that are instrumental to the current paper, can not produce sample paths given as in Fig.~\ref{fig:SamplePath} due to the above-mentioned difficulty. However, as to be shown in the sequel, MFQs can be devised to produce sample paths that contain sample cycles whose parts coincide with the sample cycles of the AoI process while losing the ordering relationship of the AoI cycles within a sample path. Moreover, these cycles contain sample values that coincide with the sample values of the PAoI process, again losing the ordering relationship among PAoI samples. In the current paper, we show that such ordering is not needed for the marginal distributions of the AoI and PAoI processes and it is this observation that leads us to use the well-established tool of MFQs for AoI and PAoI analysis. 
\begin{figure}[tb]
	\centering
	\includegraphics[width=0.8\linewidth]{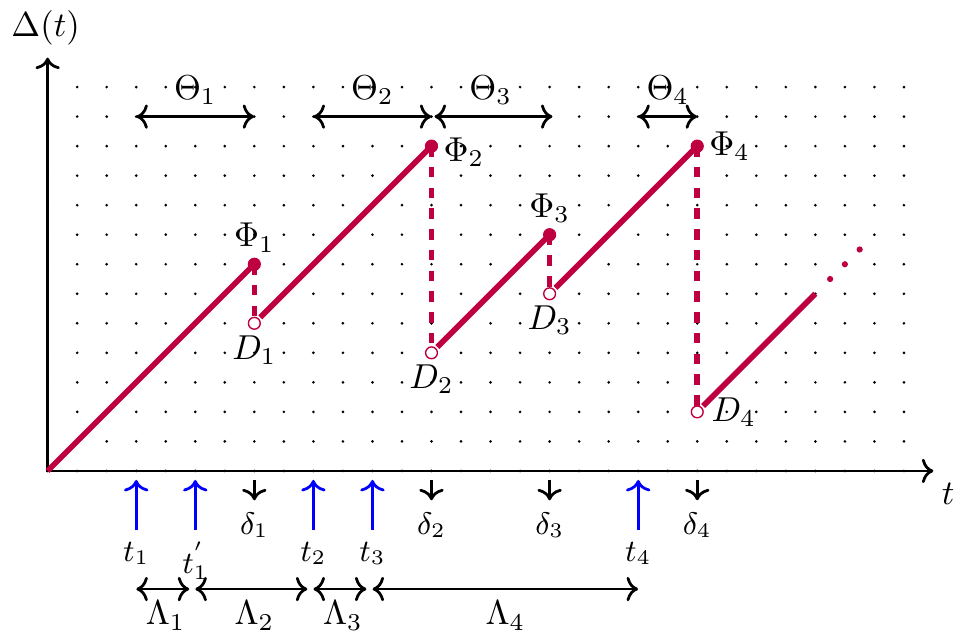}
	\caption{Illustration of the sample path for the AoI process $\Delta(t)$ and the PAoI process $\Phi_j$ for a single-buffer queueing system with five information packets arriving at the server from the source
		at epochs 3, 5, 9, 11, and 20, respectively out of which the second packet is preempted by the third packet, all other four packets being successful. 
		The successful packets 1 to 4 have a service time of 4, 4, 4, and 2, and a queue wait of 0, 0, 2, and 0, 
		respectively. $D_j$ is the system time, $\Theta_j$ is the service time, and $\delta_j$ is the reception time, of the $j^{\text{th}},j=1,\ldots,4$ successful packet, and $\Lambda_i$ is the interarrival time between the packet arrivals (successful or not) $i$ and $i+1$.
	} 
	\label{fig:SamplePath}
\end{figure}

Depending on the distribution of the interarrival time $\Lambda$, the service time $\Theta$, the queue capacity which represents the maximum number of packets that are allowed to be in the system (waiting room plus service), the scheduling discipline (First Come First Serve (FCFS), Preemptive Last Come First Serve (P-LCFS), Non-preemptive LCFS (NP-LCFS), etc.), and the buffer management scheme which refers to the specific decision of which packets are to be dropped at the server, there are several variations of the queueing system described above that have been studied in the recent literature. 
In most previous studies on AoI analysis, the focus has been on obtaining the mean AoI and mean PAoI values but there is also a need to obtain the distributions of the AoI and PAoI processes since their tail behaviors may also be important for the underlying status update system.
In \cite{kaul_etal_infocom12}, the mean AoI is obtained for the $M/M/1$, $M/D/1$, and $D/M/1$ queues with infinite buffer capacity and FCFS scheduling. The authors of \cite{inoue_etal_ISIT17} derive expressions
for the Laplace-Stieltjes transform of the stationary distributions
of the AoI and the PAoI in $M/GI/1$ and $GI/M/1$ queues. However, these infinite buffer queueing systems that are quite popular in other contexts tend to perform quite poorly in terms of AoI in moderate to high load regimes. Recognizing this fact, the reference \cite{costa_etal_TIT16} studies the AoI and PAoI distributions in the   $M/M/1/1$ and $M/M/1/2$ queues in which packets are dropped when the buffer is full at arrival epochs. 
The authors of \cite{costa_etal_TIT16} also introduce an alternative model, the so-called $M/M/1/2^{\ast}$ queue, for which the packet waiting in the queue is to be replaced by a new packet arrival, i.e., more formally named as $M/M/1/2/\mathit{NP\textnormal{-}LCFS}$ queue. This particular system is further studied in \cite{najm_nasser_isit16} as well as the $M/M/1/1/\mathit{P\textnormal{-}LCFS}$ preemptive queue where a new arrival preempts the packet in service and  the service time distribution is assumed to follow a gamma distribution and mean AoI and mean PAoI results are given. 
The reference \cite{inoue_etal_tit19} presents exact expressions for the stationary distributions of AoI and PAoI for a very wide class of single-source information update systems including bufferless and single buffer variations.
There are also several recent studies that attempt to obtain easily computable bounds for the crucial AoI-related metrics of interest.
As an example, \cite{soysal_ulukus_unpublished} derives exact expressions and upper bounds for the mean AoI for
$G/G/1/1$ and $G/G/1/1/\mathit{P\textnormal{-}LCFS}$ queues and they report that the upper bounds are in general close to exact average age expressions. Similarly, the authors of \cite{champati_etal_infocom19}  present a simple methodology for obtaining upper bounds for the AoI violation probability for both $GI/GI/1/1$ and $GI/GI/1/2^{\ast}$ systems, in addition to some exact closed-form expressions for some sub-cases.

A random variable (rv) corresponding to the time until absorption of a continuous-time Markov chain with one absorbing state is said to possess a phase-type (PH-type or PH) distribution; see \cite{neuts81} for properties of PH-type distributions and their applications to performance modeling.
Exponential, hyper-exponential, Erlang, and Coxian distributions, and their mixtures, are well-known examples of PH-type distributions.
One important property is that the set of PH distributions is dense in the field of all positive-valued distributions and PH distributions can therefore be used to approximate any positive-valued distribution \cite{ocinneide}. An Expectation Maximization (EM) algorithm for maximum likelihood estimation from sample data and density, for the purpose of approximation using PH distributions, is presented in \cite{asmussen_etal_SJS96} and very good fits to densities including Weibull, lognormal, etc. are obtained. The most well-known attribute of PH distributions is that problems arising in queueing systems that have an explicit solution assuming exponential distributions turn out to stay algorithmically tractable in case the exponential distribution is to be replaced with a PH distribution; see \cite{asmussen_etal_SJS96} and the references therein. In this way, using PH distributions allows one to algorithmically study the impact of higher order moments (such as the scov) of interarrival and service times on the performance of the queueing systems of interest. 

In this paper, we propose to use PH distributions to generalize the existing results on AoI for modeling interarrival and service times.
In particular, we propose a numerical algorithm for obtaining the exact distributions of both the AoI and PAoI processes as well as their moments for the following two systems:
\begin{enumerate}[(i)]
	\item The bufferless $PH/PH/1/1/P(p)$ queue with PH-type information packet arrivals and PH-type service times, and a packet in service is preempted by a new arrival with preemption probability $p$, $0 \leq p \leq 1$. When $p=0$, this model reduces to the ordinary  $PH/PH/1/1$ queue, whereas for $p=1$, the model is referred to as the LCFS preemptive $PH/PH/1/1/\mathit{P\textnormal{-}LCFS}$ queue, or shortly the $PH/PH/1/1^{\ast}$ queue. 
	\item The single buffer $M/PH/1/2/R(r)$ queue with Poisson arrivals and PH-type service times and with a waiting room of one packet only, and the packet in the waiting room is replaced with packet replacement probability $r$, $0 \leq r \leq 1$. 
	When $r=0$, this model reduces to the $M/PH/1/2$ FCFS queue, whereas for $r=1$, the model is referred
	to as the LCFS non-preemptive $M/PH/1/2/NP\textnormal{-}LCFS$ queue, or shortly the $M/PH/1/2^*$ queue named in \cite{costa_etal_TIT16}.
\end{enumerate}
For both queueing systems, we are interested in finding the following steady-state cdfs for the random processes $\Delta(t), t \geq 0,$ and $\Phi_j, j \geq 1$:
\begin{equation}	
F_{\Delta}(x) = \lim\limits_{t\to \infty }   \Pr \{ \Delta(t) \leq x \}, \ F_{\Phi}(x) = \lim\limits_{j \to \infty }   \Pr \{ \Phi_j \leq x \}, \ x \geq 0.
\end{equation}
Note that $F_{\Delta}(0)$ and $F_{\Phi}(0)$ must be zero since there can not be a probability mass at the origin for these two processes.
Also, let $f_{\Delta}(x)$ and $f_{\Phi}(x)$ for $x \geq 0$ denote the corresponding steady-state pdfs.
We have the following main contributions:
\begin{itemize}
	\item For the analytic modeling of AoI and PAoI, we propose to use a generalization of the well-established theory of MFQs whose mathematical foundation goes back to the works \cite{anick_mitra82},\cite{kosten.1984},\cite{kulkarni_1997}. Existing MFQ solvers that we use are matrix analytical and they rely on numerically stable and efficient vector-matrix operations. Moreover, the form of the obtained AoI and PAoI distributions is matrix exponential, making it amenable to calculate the tail probabilities, moments, etc. quite straightforwardly. This is in contrast with the need to invert Laplace transforms for finding age violation probabilities and the need for differentiating transforms for obtaining higher order moments as in \cite{inoue_etal_tit19}. 
	\item To the best of our knowledge, probabilistic preemption and replacement has not been analytically studied in the context of AoI distributions before. In this paper, we analytically model these probabilistic schemes and also demonstrate the merits of using probabilistic preemption in bufferless scenarios. Moreover, we have been able to provide a unifying algorithm for each of the bufferless and single-buffer queues through the probabilistic preemption and probabilistic replacement parameters.
\end{itemize}
The organization of the paper is as follows. In Section~2, preliminaries on PH distributions and MFQs are presented. Section~3 presents the proposed method for the two queueing systems of interest. In Section~4, we provide numerical examples to validate the proposed approach as well as the comparative assessment of the systems of interest under varying traffic parameters. Finally, we conclude in Section~5.

\section{Preliminaries}
\subsection{Notation}
Uppercase letters are used to denote real-valued matrices. Lowercase bold letters or symbols are used to denote real-valued vectors or scalars.
The $(i,j)^{\text{th}}$ th entry of $A$ is denoted by $A_{i,j}$ and the $j^{\text{th}}$ entry 
of a row or column vector $\bm{\alpha}$ is ${\alpha}_j$. 
The notations $\bm{0}_{k \times \ell} $, ${\bm I_m}$, and ${\bm 1_n}$ are used to denote the matrix of zeros of size $k \times l$, identity matrix of size $m$, and a column matrix of
ones of size $n$, respectively. The subscripts are dropped when the sizes are clear from the context. 
Let $A$ be an $n \times m$ matrix and $B$ a $p\times q$ matrix. The Kronecker product of the matrices $A$ and $B$ 
is denoted by  $A\otimes B$ which is of size $np \times mq$.
The notation $\textbf{diag}\{A,B\}$ denotes the block diagonal concatenation of the matrices $A$ and $B$ and is diagonal if the individual matrices $A$ and $B$ are diagonal.
A square matrix is said to be stable (anti-stable) if each of its eigenvalues has negative (non-negative) real parts. 
The notation $\left[\bm{\alpha},\bm{\beta}  \right]$ is used for the concatenation of the two row vectors $\bm{\alpha}$ and $\bm{\beta}$. 
$u(x)$ denotes the Heaviside step function, also known as the unit step function. $\delta(x)$ denotes the Dirac delta function, also known as the unit impulse function.
\subsection{Phase-type Distributions}
\label{ph}
To describe a
PH-type distribution, a Markov process is defined on the state-space
$\mathcal{S} = \{1,2,\ldots,m,m+1\}$ with one absorbing state $m+1$, initial probability vector $\left[ \bm{\sigma},\sigma_0\right] $, and an
infinitesimal generator of the form 
\( \left[ \begin{array}{{c;{2pt/2pt}c}}
S & \bm{\bm{\nu}} \\ \hdashline[2pt/2pt]
\bm{0} & 0 \end{array}
\right],
\)
where $\bm{\sigma}$ is a row vector of size $m$, $\sigma_0=1-\bm{\sigma} {\bm  1}$  is a scalar, the subgenerator
$S$ is $m \times m$, and $\bm{\bm{\nu}}$ is a column vector of size $m$ such that $\bm{\nu}=-S{\bm 1}$. The distribution of the time
till absorption into the absorbing state $m+1$, denoted by the random variable $X$, is called  PH-type characterized with the pair $(\bm{\sigma},S)$, i.e., $X \sim PH(\bm{\sigma},S)$. Very commonly, the probability mass at zero vanishes for PH-type distributions used for modeling interarrival times and service times, i.e., $\sigma_0=0$ \cite{neuts81}. 
Note that $\left[ \bm{\sigma}, \sigma_0 \right]$ is a probability vector 
and the diagonal elements of $S$ are strictly negative, its off-diagonal elements are non-negative and $S {\bm 1}\leq \bm{0}$ elementwise. 
The cdf and the pdf of $X \sim PH(\bm{\sigma},S)$, denoted by $F_X(x)$ and $f_X(x)$, respectively, are given as: 
\begin{equation}
F_X(x) =(1 -\bm{\sigma} e^{Sx} {\bm 1}) \:  u(x), \;
f_X(x) =-\bm{\sigma} e^{Sx} S {\bm 1} \:  u(x) + \sigma_0 \:  \delta(x).
\label{phdensity}
\end{equation}
A generalization of the PH-type distribution is the so-called Matrix Exponential (ME) distribution
\cite{AsmussenBladt97},\cite{he_aap07}. We say $X \sim ME(\bm{\sigma},S)$ with order $m$ when the pdf of the random variable $X$ is in the same form \eqref{phdensity} as in PH-type distributions, however, for the ME distribution, the parameters $\bm{\sigma}$ and $S$ do not necessarily have the same stochastic interpretation. 
The moments of $X \sim PH(\bm{\sigma},S)$ or $X \sim ME(\bm{\sigma},S)$ are easy to obtain:
\begin{equation}
\label{moments}
E[X^i]=i! \bm{\bm{\sigma}} (-S)^{-i} \bm{1}, \ i=1,2,\ldots.
\end{equation}
Despite the lack of stochastic interpretation of the row vector $\bm{\sigma}$ and the matrix $S$, ME distributions can be substituted in place of PH-type distributions in computational techniques and algorithms \cite{buchholz_telek_peva10}. 
We will also carry out a similar approach in this paper and use the more general ME distributions in place of PH-type distributions when necessary.

\begin{lemma}
	\label{lemma1}
	Let the rv $X$ have a pdf 
	$f_X(x)$ in the following matrix exponential form
	\begin{equation}
	f_X(x) = \bm{g} e^{Ax} \bm{h} \: u(x) + \sigma_0 \: \delta(x), \ E[X^i] = (-1)^{i+1} i! \bm{g} A^{-(i+1)} \bm{h}, \ i=1,2,\ldots,
	\label{statespace}
	\end{equation}
	with $A$ being of size $n$ and $\int_{-\infty}^{\infty} f_X(x) \ dx = -{\bm g} A^{-1} {\bm h} + \sigma_0=1$.
	Then, X is ME-distributed, i.e., there exists a vector $\bm{\sigma}$ and a matrix $S$ such that $X \sim ME(\bm{\sigma},S)$ with order $n$.
\end{lemma}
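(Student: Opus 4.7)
My plan is to exhibit an explicit change of basis that converts the given triple $(\bm{g}, A, \bm{h})$ into the canonical ME form $(\bm{\sigma}, S)$. For any invertible $n \times n$ matrix $T$, the choice $S := T^{-1} A T$ and $\bm{\sigma} := -\bm{g} T$ yields, via the similarity identity $T^{-1} e^{Ax} T = e^{Sx}$,
\[
-\bm{\sigma}\, e^{Sx}\, S\, \bm{1} \;=\; \bm{g} T \cdot T^{-1} e^{Ax} T \cdot T^{-1} A T\,\bm{1} \;=\; \bm{g}\, e^{Ax}\, A T\,\bm{1},
\]
so it suffices to pick $T$ satisfying the single linear constraint $A T \bm{1} = \bm{h}$, equivalently $T\bm{1} = A^{-1} \bm{h}$. (Note the moment identity in \eqref{statespace} implicitly requires $A$ to be invertible.) Once such a $T$ is in hand, the continuous part of the density will agree with the ME form in \eqref{phdensity}, while the impulse at the origin is transferred verbatim.

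The only substantive step is therefore producing an invertible $T$ realizing this constraint. Let $\bm{v} := A^{-1}\bm{h}$; the normalization hypothesis gives $\bm{g}\bm{v} = \sigma_0 - 1$. If $\sigma_0 = 1$, then $f_X \equiv \delta(x)$ and the claim holds trivially by taking $\bm{\sigma} = \bm{0}$ together with any stable $n \times n$ matrix $S$. Otherwise $\bm{v} \neq \bm{0}$, and after a coordinate permutation I may assume $v_1 \neq 0$. I will then define $T$ to be the lower-triangular matrix whose first column is $\bm{v} - \sum_{j=2}^{n} \bm{e}_j$ and whose remaining columns are $\bm{e}_2, \ldots, \bm{e}_n$; one checks by inspection that $T\bm{1} = \bm{v}$, and $\det T = v_1 \neq 0$ confirms invertibility.

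To close the argument I will verify the two natural consistency checks. Total mass: $\bm{\sigma}\bm{1} + \sigma_0 = -\bm{g} T\bm{1} + \sigma_0 = -\bm{g} A^{-1}\bm{h} + \sigma_0 = 1$, exactly the hypothesis. Moments:
\[
i!\,\bm{\sigma}\,(-S)^{-i}\,\bm{1} \;=\; -i!\,\bm{g}\,(-A)^{-i}\, T\bm{1} \;=\; (-1)^{i+1}\, i!\,\bm{g}\, A^{-(i+1)}\,\bm{h},
\]
which is precisely the $E[X^i]$ expression in \eqref{statespace} and is consistent with the general formula \eqref{moments}. The only genuine obstacle throughout is the explicit construction of $T$; the rest is routine algebra powered by the commutation of $T^{-1}$ and $e^{Ax}$ through the similarity.
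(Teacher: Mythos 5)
Your proof is correct and follows essentially the same route as the paper: construct an explicit nonsingular matrix $T$ with $T\bm{1}=A^{-1}\bm{h}$ (the paper's $M$ does the same, just built column-by-column around the first nonzero entry of $\bm{v}$ instead of via a permutation) and conjugate $A$ by it. Your sign convention $\bm{\sigma}=-\bm{g}T$ is in fact the consistent one for matching the form \eqref{phdensity}, and your explicit handling of the degenerate case $\sigma_0=1$ is a small but welcome addition.
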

\begin{proof}
	When $h \neq 0$, there exists a nonsingular matrix $M$ such that $M{\bm 1} = {\bm v}$ where ${\bm v}=A^{-1}\bm{h}$.
	To show this, note that the eigenvalues of $A$ need to be in the open left half plane for $X$ to have a legitimate pdf. In particular, $A$ is non-singular and therefore ${\bm v}$ has at least one non-zero element.
	Let $k$ be the smallest index $k$ such that $v_k \neq 0$. We construct the matrix $M$ as follows. We set $M_{k,k}=v_k$ and for all $i \neq k$,
	if $v_i\neq0$, we set $M_{i,i}=v_i$, and if $v_i=0$, we set $M_{i,i}=1, M_{i,k}=-1$. All the remaining entries of $M$ are zero. Clearly, the matrix $M$ constructed as above is non-singular and  satisfies $M{\bm 1} ={\bm v}$. 
	Subsequently, one can easily show that $f_X(x)$ can be written in the form (\ref{phdensity}) with the choice of
	$\bm{\sigma}= \bm{g}M,\ S=M^{-1}AM$ and therefore $X\sim ME(\bm{\sigma},S)$.
\end{proof} 
\subsection{Markov Fluid Queues}
\label{mfq}
A Markov Fluid Queue (MFQ) is described by a joint Markovian process  
${\bm X(t)}= (X_f(t),X_m(t))$, $ t\geq 0$, where
$0\leq X_f(t) < \infty,X_m(t) \in {\mathcal S}= \{1,2,\ldots,n\}$,
$X_f(t)$ represents the continuous-valued fluid level in the buffer and the modulating phase process $X_m(t)$  is a Continuous Time Markov Chain (CTMC) with state space ${\mathcal S}$ and generator $Q$. 
In MFQs, the net rate of fluid change (or drift) is $r_i$ when the phase of the modulating process $X_m(t)$ is $i$. The drift matrix $R$ is the diagonal matrix of drifts: $ R=\textbf{diag}\{
r_1, r_2, \ldots, r_{n} \}$. When $X_f(t)=0$ and $X_m(t)=i$ with $r_i < 0$, $X_f(t)$ sticks to the boundary at zero.  The process ${\bm X(t)}$ is said to be characterized with the matrix pair $(Q,R)$, i.e., ${\bm X(t)} \sim MFQ(Q,R)$.
Stationary solution of infinite MFQs are studied in \cite{kulkarni_1997},\cite{anick_mitra82} by using the eigendecomposition of a  certain matrix.
The reference \cite{akar_sohraby_jap04} obtains the stationary solution of infinite and finite MFQs without having to find the eigenvectors, a problem which is known to be  ill-conditioned \cite{golub.vanloan.1996}.

The MFQ of interest to this paper is slightly different in the sense that  $X_m(t)$ behaves according to generator $Q$ as before when $X_f(t) > 0$ but it behaves according to another generator $\tilde{Q}$ when $X_f(t) =0$. This generalized MFQ, called a GMFQ throughout this paper, is characterized with the ordered  triple $(Q,\tilde{Q},R)$, i.e., ${\bm X(t)} \sim GMFQ(Q,\tilde{Q},R)$. The size of these matrices, $n$, gives the order of the GMFQ.
GMFQs turn out to be a special case of the more general multi-regime MFQs studied in \cite{kankaya.2008}. Therefore, their steady-state solutions are known and the underlying numerical methods are shown to be numerically stable and efficient.
We assume $r_i \neq 0,\ 1 \leq i \leq n$ which suffices for the AoI models developed in this paper.
We further assume without loss of generality that there are $b$ ($a$) states with positive (negative) drifts with $a+b=n$
and
$r_i > 0,\ i \leq b$ and $r_i < 0,\ i > b$, since otherwise states can always be reordered for this purpose.
For GMFQs, it is of interest the following steady-state joint pdf vector
\begin{equation}
\bm{f}(x)  =  \left[  f_1(x), f_2(x),\ldots,f_{n}(x) \right], \quad  f_i(x)  =  \lim\limits_{t\to \infty } \frac{d}{dx}  \Pr\{X_f(t)\leq x, X_m(t)=i\}, \; x >0 ,\label{density}
\end{equation}
and the steady-state probability mass accumulation (pma) vector at zero:
\begin{equation}
\bm{c} = \left[  c_1, c_2, \ldots, c_{n}  \right], \quad
c_i  =  \lim\limits_{t\to \infty } \Pr \{X_f(t)=0, X_m(t)=i\}. \label{accumulation}
\end{equation}
Alg.~1 describes the pseudo-code of obtaining the quantities of interest in \eqref{density} and \eqref{accumulation} on the basis of the numerical algorithm proposed in  \cite{kankaya.2008} for more general multi-regime MFQs in cases when  the steady-solution exists. 
\begin{algorithm}[t]
	\label{alg1}
	\caption{Steady-state Solution of $GFMQ(Q,\tilde{Q},R)$}
	\begin{algorithmic}[1]
		\Function{Steady-State}{$Q,\tilde{Q},R,n,a,b$}\Comment{$n$ is the size of these matrices}
		\State Step 1: Find an orthogonal matrix $P$ that puts the matrix $Q R^{-1}$ into the following form: \begin{equation}
		P^T Q R^{-1} P =\left[ \begin{array}{c;{2pt/2pt}c}
		F_{a \times a} & \ast  \\ \hdashline[2pt/2pt]
		\bm{0} & A_{b \times b}
		\end{array} \right], \quad P^T = \left[ \begin{array}{c} 
		\ast \\ \hdashline[2pt/2pt]
		H_{b \times n}
		\end{array} \right]  \label{step2}
		\end{equation}
		for an anti-stable matrix $F$ with an eigenvalue at the origin, stable matrix $A$, and $\ast$ denoting an arbitrary sub-matrix. The ordered real Schur form (available in Lapack, Matlab, and Octave software packages) is one alternative means of obtaining \eqref{step2}; see \cite{golub.vanloan.1996},\cite{akar_sohraby_peva09} and the references therein for the Schur form and its numerical stability.
		\State {Step 2:}  Solve for the $1 \times b$ vector $\bm{g}$ and $1 \times a$ vector $\bm{d}$ from the following linear matrix equation:
		\[
		\left[ \begin{array}{c;{2pt/2pt}c}
		\bm{g} & \bm{d}
		\end{array} \right]
		\left[ \begin{array}{c;{2pt/2pt}c}
		HR & -A^{-1}H{\bf 1}_n \\ \hdashline[2pt/2pt]
		-\tilde{Q}^{\ast} & {\bf 1}_a 
		\end{array} \right] = 
		\left[ \begin{array}{c;{2pt/2pt}c}
		{\bm 0}_{1 \times n} & 
		1
		\end{array} \right],
		\]
		with $\tilde{Q}^{\ast}$ denoting the matrix composed of the last $a$ rows of $\tilde{Q}$. 
		\State {Step 3:} Write $\bm{c}=\left[ {\bm 0},\bm{d} \right]$ and the steady-state joint pdf vector as \begin{equation}
		\bm{f}(x) = \bm{g} e^{Ax} H \ u(x) + \bm{c} \ \delta(x),  \ f_i(x)= \bm{g} e^{Ax} {\bm h}_i \ u(x)+ c_i \ \delta(x),  \label{meform}
		\end{equation}
		where ${\bm h}_i$ denotes the $i^{\text{th}}$ column of $H$.
		\EndFunction	
	\end{algorithmic}
\end{algorithm}
The computational complexity of the first two steps are $\mathcal{O}(n^3)$ in Alg.~1.
In an important sub-case that will be shown to arise in AoI queueing models later in this paper, a simple explicit way of complexity $\mathcal{O}(n^2)$ to find a matrix $P$ in \eqref{step2} of Alg.~1 is presented in the following lemma based on the Householder transformation given in \cite{golub.vanloan.1996}.
\begin{lemma}
	\label{householder}
	Consider the process ${\bm X(t)} \sim GMFQ(Q,\tilde{Q},R)$ with order $n$, $R = \textbf{diag} \{ \bm{I},-1\}$, and the last row of $Q$ is the zero matrix. Let $\bm{u_1}$ be a column vector of ones of size $n$ except for the last entry which is minus one. Also, let $\bm{u_2}$ be a column vector of zeros except for the first entry which is one. 
	Let $\bm{u}=\bm{u_1} - || \bm{u_1} ||_2 \bm{u_2}$. 
	Then, the symmetric orthogonal matrix $P$ defined by $P={\bm I} - \frac{2 \bm{uu^T}}{\bm{u^T u}}$ gives rise to the factorization \eqref{step2} with the scalar $F$ being zero. \end{lemma}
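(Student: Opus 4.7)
The plan is to exploit two facts in tandem: the Householder reflector $P$ in the statement maps $\bm{e_1}$ to a scalar multiple of $\bm{u_1}$, and $\bm{u_1}$ spans a kernel direction of $QR^{-1}$ under the given assumptions. Combining these, the claimed block triangular form drops out from evaluating the first column of $P^T QR^{-1} P$.

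First I would observe that $R=\textbf{diag}\{\bm{I},-1\}$ is involutive, so $R^{-1}=R$, and that $R^{-1}\bm{u_1}=\bm{1}$ because $R^{-1}$ merely flips the sign of the lone $-1$ in the last entry of $\bm{u_1}$. Since $Q$ is a CTMC generator, $Q\bm{1}=\bm{0}$, whence $QR^{-1}\bm{u_1}=\bm{0}$. Next, the defining identity of the reflector gives $P\bm{u_1}=\|\bm{u_1}\|_2\,\bm{u_2}=\sqrt{n}\,\bm{u_2}$, and because $P$ is symmetric and orthogonal, $P\bm{e_1}=P\bm{u_2}=\tfrac{1}{\sqrt{n}}\bm{u_1}$. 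Combining,
\begin{equation*}
P^T QR^{-1} P\,\bm{e_1} \;=\; \tfrac{1}{\sqrt{n}}\,P\,(QR^{-1}\bm{u_1}) \;=\; \bm{0}.
\end{equation*}
In the partition with $a=1$, $b=n-1$, this identifies the top-left $1\times 1$ block as $F=0$ and annihilates the $(n-1)\times 1$ block beneath it, matching \eqref{step2} exactly. Finally, $F=0$ is trivially anti-stable with its only eigenvalue at the origin, and since $P^T QR^{-1} P$ is similar to $QR^{-1}$, the spectrum of the lower-right block $A$ consists of the remaining eigenvalues of $QR^{-1}$; I would invoke the standard spectral dichotomy for MFQ pencils from \cite{akar_sohraby_jap04}, together with the hypothesis that the last row of $Q$ vanishes, to conclude that $A$ is stable.

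The main obstacle is essentially notational: recognizing that the kernel direction of $QR^{-1}$ is $\bm{u_1}$ rather than $\bm{1}$—because the negative drift is absorbed by $R^{-1}$—is what makes the particular Householder choice built on $(\bm{u_1},\bm{u_2})$ zero the entire first column in a single stroke. Once this alignment is spotted, the rest is a one-line application of the reflector identity.
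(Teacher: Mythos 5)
Your proof is correct and is precisely the intended argument (the paper states this lemma without proof, citing only the Householder construction of Golub--Van Loan): the chain $R^{-1}\bm{u_1}=\bm{1}$, $Q\bm{1}=\bm{0}$, $P\bm{u_2}=\tfrac{1}{\sqrt{n}}\bm{u_1}$ zeroes the entire first column of $P^TQR^{-1}P$, which in the partition $a=1$, $b=n-1$ is exactly the factorization \eqref{step2} with $F=0$. The only step you leave sketched is the stability of $A$, and it is worth noting that this is where the hypothesis that the last row of $Q$ vanishes actually does its work: it makes $QR^{-1}$ block upper triangular with a nonsingular subgenerator as its leading $(n-1)\times(n-1)$ block, so the spectrum of $QR^{-1}$ is that subgenerator's stable spectrum together with the simple eigenvalue at the origin that your construction isolates in $F$, whence $A$ is stable by similarity.
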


Other than the steady-state joint pdf and pma vectors, we are also interested in the following steady-state conditional density of the fluid level just before a visit from any state in a particular subset ${\mathcal{S}_0}\subset\mathcal{S}$ to one particular state $j \in \mathcal{S}\setminus{\mathcal{S}_0}$:
\begin{equation}
g_{j}^{{\mathcal{S}_0}}(x)= \frac{d}{dx} \lim_{\substack{t \to \infty \\  \Delta t \to 0}} \Pr\{X_f(t) \leq x \mid
X_m(t) \in {\mathcal{S}_0}, X_m(t+\Delta t) = j\}, \; x \geq 0.\label{gj}
\end{equation}
The following theorem provides a closed-form expression to obtain the conditional steady-state density $g_{j}^{{\mathcal{S}_0}}(x)$ from the steady-state joint density $f_i(x)$ of the MFQ  ${\bm X(t)}$.
\begin{theorem}
	\label{beginningtheorem}
	Let ${\bm X(t)} \sim GMFQ(Q,\tilde{Q},R)$ be of order $n$ and $\bm{f}(x)$ be its steady-state joint pdf vector given as in \eqref{meform}. Let ${\mathcal{S}_0}\subset\mathcal{S}$ be such that the fluid level does not have a probability mass at zero in any of the states in ${\mathcal{S}_0}$. Then, the steady-state conditional density $g_{j}^{{\mathcal{S}_0}}(x)$ is written in terms of the steady-state density vector $\bm{f} (x)$:
	\begin{equation}
	g_{j}^{{\mathcal{S}_0}}(x) 
	= \dfrac{ \sum\limits_{i\in{\mathcal{S}_0}} f_i(x)Q_{i,j} }{ \int\limits_{0}^{\infty} \sum\limits_{i\in{\mathcal{S}_0}} f_i(x^{\prime})Q_{i,j}dx^{\prime} } = 
	{\bm g^{\prime}} e^{Ax} {\bm h} \ u(x),
	\label{beginning_expression}  
	\end{equation}	
	where $\bm{h}=H \bm{\eta}$, $\eta_i = Q_{i,j}$ if $i \in {\mathcal{S}_0}$ and zero otherwise, and 
	$\bm{g^{\prime}} = {\bm g}/(-{\bm g} A^{-1} {\bm h})$.
\end{theorem}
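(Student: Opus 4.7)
The plan is to first justify the initial equality in \eqref{beginning_expression} via a standard rate-based argument for the joint Markovian process, and then to reduce the resulting expression to the claimed matrix-exponential form by exploiting the assumption that $c_i=0$ for $i\in\mathcal{S}_0$.

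First I would condition directly on the event in the definition \eqref{gj}. For small $\Delta t>0$ and any $i\in\mathcal{S}_0$ with $i\neq j$, the infinitesimal description of the phase CTMC gives
\[
\Pr\{X_m(t+\Delta t)=j \mid X_f(t)=x, X_m(t)=i\} = Q_{i,j}\Delta t + o(\Delta t),
\]
independently of $x$ (since $X_f$ is continuous in time and cannot affect the jump out of a non-zero-drift state). Multiplying by $f_i(x)\,dx$, summing over $i\in\mathcal{S}_0$, dividing by the analogous marginal probability, and passing to the limit $\Delta t\to 0$ yields the first equality
\[
g_j^{\mathcal{S}_0}(x)=\dfrac{\sum_{i\in\mathcal{S}_0} f_i(x)Q_{i,j}}{\int_0^\infty \sum_{i\in\mathcal{S}_0} f_i(x')Q_{i,j}\,dx'}.
\]
The hypothesis that $f_i(x)$ has no point mass at $0$ for $i\in\mathcal{S}_0$ (i.e.\ $c_i=0$) is what guarantees that the numerator and denominator come entirely from the absolutely continuous part of $\bm{f}(x)$, so that no Dirac term leaks into $g_j^{\mathcal{S}_0}$ and the ratio is a bona fide density on $(0,\infty)$.

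Next I would insert the matrix-exponential expression \eqref{meform} for each $f_i(x)$. Since $c_i=0$ for every $i\in\mathcal{S}_0$, we have $f_i(x)=\bm{g}e^{Ax}\bm{h}_i$ for $x>0$, so
\[
\sum_{i\in\mathcal{S}_0} f_i(x)Q_{i,j} = \bm{g}\,e^{Ax}\sum_{i\in\mathcal{S}_0} \bm{h}_i Q_{i,j} = \bm{g}\,e^{Ax} H\bm{\eta} = \bm{g}\,e^{Ax}\bm{h},
\]
with $\bm{\eta}$ and $\bm{h}$ as defined in the statement; the indicator built into $\bm{\eta}$ encodes the restriction $i\in\mathcal{S}_0$. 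The denominator is then $\int_0^\infty \bm{g}\,e^{Ax}\bm{h}\,dx = -\bm{g}A^{-1}\bm{h}$, using stability of $A$ (inherited from the construction in Alg.~1, where $A$ is stable by design so that the density is integrable). Dividing gives $g_j^{\mathcal{S}_0}(x)=\bm{g}'e^{Ax}\bm{h}\,u(x)$ with $\bm{g}'=\bm{g}/(-\bm{g}A^{-1}\bm{h})$, which is the second equality.

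The main obstacle, and the only place any care is needed, is the first step: making the rate-based derivation rigorous, in particular ruling out that the conditioning event $\{X_m(t)\in\mathcal{S}_0,\,X_m(t+\Delta t)=j\}$ picks up anomalous contributions from $X_f(t)=0$. This is precisely why the hypothesis $c_i=0$ for $i\in\mathcal{S}_0$ is imposed: it ensures that the steady-state occupation of each $i\in\mathcal{S}_0$ lies in $\{X_f>0\}$ with full probability, so the standard CTMC infinitesimal rates govern the transition $i\to j$ and the joint steady-state density $f_i(x)$ is the correct weighting. Once this is in place, the remainder is routine linear algebra with the matrix-exponential representation.
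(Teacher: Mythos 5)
Your proposal is correct and follows essentially the same route as the paper: an infinitesimal (rate-based) Bayes argument, with the hypothesis $c_i=0$ for $i\in\mathcal{S}_0$ ensuring the transition out of state $i$ is governed by $Q$ rather than $\tilde{Q}$, yields the first equality, and substituting the matrix-exponential form \eqref{meform} with $\bm{h}=H\bm{\eta}$ and the normalization $-\bm{g}A^{-1}\bm{h}$ gives the second. The only cosmetic difference is that the paper conditions on $X_m(t)=i$ and invokes conditional independence of $\{X_f(t)\le x\}$ and the next-jump event, while you condition on $X_f(t)=x$ directly; the content is the same.
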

\begin{proof}
	We first write
	\begin{align}
	&\lim_{\substack{t \to \infty \\  \Delta t \to 0}} \Pr\{X_f(t) \leq x \mid X_m(t) \in {\mathcal{S}_0}, X_m(t+\Delta t) = j\} = \nonumber\\
	& \lim_{\Delta t \to 0} \dfrac{\lim\limits_{t \to \infty} \Pr\left\{X_f(t) \leq x, X_m(t) \in {\mathcal{S}_0}, X_m(t+\Delta t) = j\right\}}{\lim\limits_{t \to \infty} \Pr\left\{X_m(t) \in {\mathcal{S}_0}, X_m(t+\Delta t) = j\right\}}.\label{yeni}
	\end{align}
	Note that as $\Delta t$ approaches $0$, the denominator of above can be written as:
	\begin{equation*}
	\begin{aligned}
	& 
	= \lim_{t \to \infty} \sum\limits_{i\in{\mathcal{S}_0}}\  \Pr \{ X_m(t+\Delta t)=j \mid X_m(t) = i \}  \cdot \Pr\{ X_m(t) = i \}, \\
	&=  \sum\limits_{i\in{\mathcal{S}_0}} Q_{i,j}  \Delta t \cdot \int\limits_{0}^{\infty} f_i(x^{\prime}) dx^{\prime} =
	\Delta t \int\limits_{0}^{\infty} \sum\limits_{i\in{\mathcal{S}_0}} f_i(x^{\prime})Q_{i,j}dx^{\prime}.
	\end{aligned}
	\end{equation*}
	On the other hand, as $\Delta t$ approaches $0$, the numerator of the expression \eqref{yeni} is written as:
	\begin{align}
	&
	=\lim_{t \to \infty} \sum\limits_{i\in{\mathcal{S}_0}} \Pr \{X_f(t)\leq x, X_m(t+\Delta t)=j \mid  X_m(t) = i \} \cdot \Pr \left\{ X_m(t) = i \right \},
	\nonumber\\
	&= \lim_{t \to \infty} \sum\limits_{i\in{\mathcal{S}_0}} \Pr \left\{X_f(t)\leq x \mid X_m(t) = i \right\} \cdot \Pr \left\{X_m(t+\Delta t)=j \mid X_m(t) = i \right\} \cdot \Pr \left\{ X_m(t) = i \right\}, \nonumber \\
	&= \lim_{t \to \infty} \sum\limits_{i\in{\mathcal{S}_0}} \dfrac{ \Pr \left\{X_f(t)\leq x ,\ X_m(t) = i \right\} }{ \Pr \left\{ X_m(t) = i \right\} } \cdot\Pr  \left\{X_m(t+\Delta t)=j \mid X_m(t) = i \right\} \cdot \Pr \left\{ X_m(t) = i \right\}, \nonumber \\
	&= \lim_{t \to \infty} \sum\limits_{i\in{\mathcal{S}_0}}\ \Pr \{X_f(t) \leq x , X_m(t) = i \}  \cdot\Pr \left\{X_m(t+\Delta t)=j \mid X_m(t) = i \right\}, \nonumber\\
	&= \Delta t \sum\limits_{i\in{\mathcal{S}_0}} Q_{i,j}\int\limits_{0}^{x}f_i(x^{\prime})dx^{\prime}. \nonumber
	\end{align}
	The first expression for $g_{j}^{{\mathcal{S}_0}}(x)$ in \eqref{beginning_expression} immediately follows. 
	The choice of ${\bm h}$ in the second expression  \eqref{beginning_expression} is that the numerator of the first expression in \eqref{beginning_expression} can be written in the form ${\bm g} e^{Ax} {\bm h} \ u(x)$. The choice of $\bm{g^{\prime}}$ is to ensure that the second expression is a legitimate density, i.e., the denominator of the first expression is $-{\bm g} A^{-1} {\bm h}$. 
\end{proof}
\section{Queueing Models for the AoI and PAoI Processes}
\subsection{Bufferless Queues}
In this subsection, we study the exact distributions of the AoI and PAoI processes in the $PH/PH/1/1/P(p)$ queue  
which is formally described as follows. We assume the interrarrival time $\Lambda \sim PH(\bm{\tau},T)$ with order $k$ with $\bm{\kappa}  = -T\bm{1}, \tau_0=1-\bm{\tau} {\bm 1}=0$, and the service time $\Theta\sim PH(\bm{\sigma},S)$ with order $\ell$ and $\bm{\nu}= -S {\bm 1}, \sigma_0=1-\bm{\sigma} {\bm 1}=0$. The information packet in service is to be preempted by a new information packet arrival with probability $p$, $0 \leq p \leq 1$. 

For the purpose of obtaining the exact distributions of the AoI and PAoI processes in this system, we construct a GMFQ process ${\bm X(t)}$ by which we have a single fluid level trajectory of infinitely many cycles where each cycle
begins with the arrival of an information packet into the system, proceeds until the reception 
of not the next but the second next successful information packet, and terminates with a final phase required for preparation for the next cycle. A sample path for the 
fluid level process $X_f(t)$ is depicted in Fig.~\ref{lcfs} which is constructed as follows. At the beginning of phase 1 (solid red
curve), an information packet, say packet $i$, arrives to an empty system and starts receiving service. During phase 1, this packet is in service and the fluid level $X_f(t)$ increases at a unit rate. 
When the packet $i$ is not preempted during its service time (for example in Cycle 2 in Fig.~\ref{lcfs}), a transition to phase 2 occurs  (dashed blue curve) which lasts until a new information packet arrival and in phase 2, the fluid level still increases at a unit rate. Upon a new arrival, a transition to
phase 3 occurs (dotted green curve) during which the fluid level continues to increase at a unit rate until the service of the new packet arrival is over. If this packet turns out to be preempted during
phase 3, the service time needs a reset. When phase 3 is over, the system transitions to phase 4 (dashed black curve) in which the fluid level is brought down to zero with
a drift of minus one after which the fluid level stays at level 0 for an exponentially distributed duration of time with an
arbitrary parameter (we use the unit parameter without loss of generality in this paper). When phase 4 is over, a cycle
is completed and a new cycle gets to start again comprising phases one to four.
On the other hand, if the first packet $i$ in a cycle is to be preempted by a new packet, a direct transition into phase 4 occurs without 
experiencing the phases 2 and 3. This situation happens twice in Cycle 1 of Fig.~\ref{lcfs}. 
Irrespective of the type of such cycles, our main observation is the following. 
The value of the process $X_f(t)$ at the beginning of phase 2 is the system time $D_j$ of a successful packet, say $j$. On the other hand, the value of $X_f(t)$ at the end of phase 3 (the reception epoch of packet $j+1$) is $\Phi_{j+1}$. 
Therefore, the concatenation of the blue-dashed and green-dotted curves (from the beginning of phase 2 to the end of phase 3, with the phases 1 and 4 of each cycle excluded) in Fig.~\ref{lcfs} 
is also a sample cycle of the AoI process in Fig.~\ref{fig:SamplePath}, and the fluid level just at the beginning of phase 4 in Fig.~\ref{lcfs} is also a sample value of the PAoI process in Fig.~\ref{fig:SamplePath}, and vice versa.
\begin{figure}[tb]
	\centering
	\includegraphics[scale=0.9]{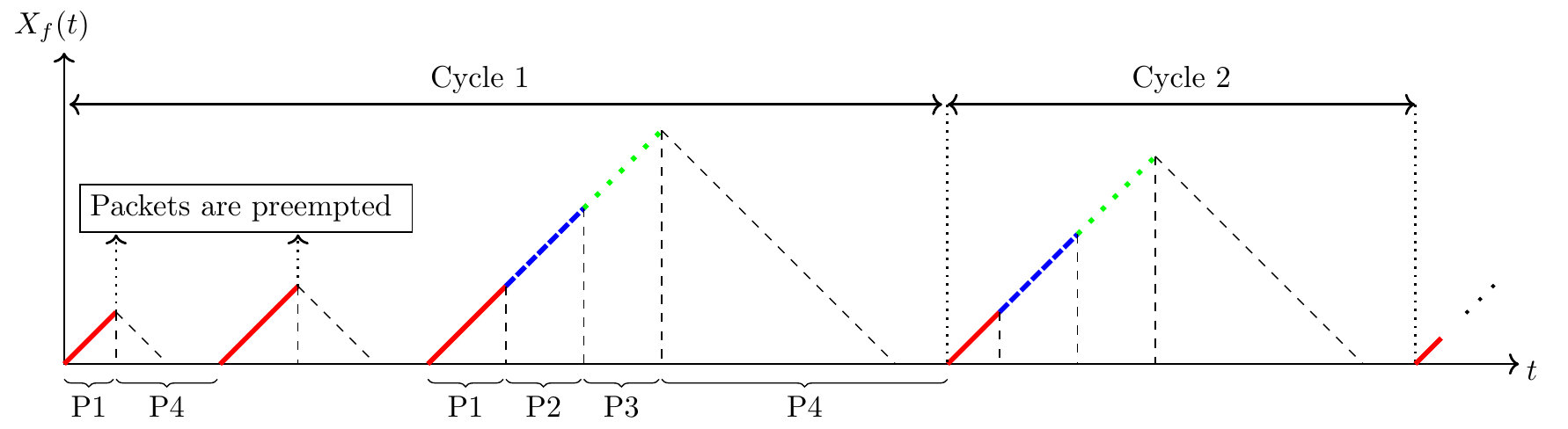}
	\caption{Sample path of the fluid level process $X_f(t)$ where $Pj$ denotes phase $j$, $j=1,\ldots,4$.}
	\label{lcfs}
\end{figure}
Moreover, the process $X_f(t)$ is the fluid level sub-process of a GMFQ process ${\bm X(t)}=(X_f(t),X_m(t))$ whose characterizing matrices we will now present. The state-space $\mathcal{S}$ of the modulating process $X_m(t)$ is written as 
$\mathcal{S}={\bigcup}_{n=1}^4\mathcal{S}_n$ where the set $\mathcal{S}_n$ refers to the set of states to be used for phase $n$ and is given as follows:
\begin{equation}
\mathcal{S}_m = \{ (i^{(m)},j^{(m)})\},\ i^{(m)}=1,\ldots,k, \ j^{(m)}=1,\ldots,\ell, \
\mathcal{S}_2 = \{ i^{(2)}\} ,\ i^{(2)}=1,\ldots,k, 
\label{S1-4}
\end{equation}
for $m=1,3$ and $\mathcal{S}_4 = \{ 0 \}$. For the set  $\mathcal{S}_m$, $i^{(m)}$ ($j^{(m)}$) keeps track of the state of the interarrival time (service time).  
With the lexicographical ordering of the states from $\mathcal{S}_1$ to $\mathcal{S}_4$,
${\bm X(t)} \sim GMFQ(Q,\tilde{Q},R)$ of order $2k \ell + k + 1$ where
\begin{equation}
Q =
\left[
\begin{array}{c;{2pt/2pt}c;{2pt/2pt}c;{2pt/2pt}c}
Q_{11} & {\bm I}_{k} \otimes \bm{\nu} & \bm{0} & p \bm{\kappa}  \otimes {\bm 1}_{\ell} \\ \hdashline[2pt/2pt]
\bm{0} & T & \bm{\kappa}  \otimes(\bm{\tau} \otimes \bm{\sigma}) & \bm{0} \\ \hdashline[2pt/2pt]
\bm{0} & \bm{0} & Q_{33} & {\bm 1}_{k} \otimes \bm{\nu} \\ \hdashline[2pt/2pt]
\bm{0} & \bm{0} & \bm{0}& {0} \\
\end{array}
\right],\ R = \textbf{diag}\{{\bm I}_{2k\ell+k},-1 \},
\label{Char1}
\end{equation}
where
\begin{equation}
Q_{11}  = {\bm I}_{k} \otimes S + T \otimes {\bm I}_{\ell} + (1-p)(\bm{\kappa}  \otimes\bm{\tau})\otimes {\bm I}_{\ell},  
\ Q_{33}  = Q_{11} +  p \bm{\kappa}  \otimes ({\bm 1}_{\ell} \otimes (\bm{\tau} \otimes \bm{\sigma})).
\end{equation}
This characterization is based on the verbal description of each of the four phases according to Fig.~\ref{lcfs}. To see this, let us assume that we are in a phase 1 state $(i^{(1)},j^{(1)})$.  A new arrival occurring with transition rate $\bm{\kappa} _{i^{(1)}}$ will lead us (i) to state 0 with probability $p$ which explains the term $p \bm{\kappa}  \otimes {\bm 1}_{\ell}$ (ii) to state 
$(v,j^{(1)})$ with probability $(1-p)\tau_v$ which explains the term $(1-p)(\bm{\kappa}  \otimes\bm{\tau})\otimes {\bf I}_{\ell}$.
When the service time is over which occurs with transition rate $\bm{\nu}_{j^{(1)}}$, we transition to a phase 2 state 
$i^{(2)}$ which is the same as $i^{(1)}$ which explains the term ${\bm I}_{k} \otimes \bm{\nu}$. Other state transitions without new arrivals or departures are reflected in $Q_{11}$. One can follow this procedure to justify the other blocks of the matrix $Q$. The expression for $R$ immediately follows the verbal description of the four phases in the sample paths.
$\tilde{Q}$ is a matrix of zeros except for the $1 \times k \ell$ block at the south-west corner of $Q$ which should be set to $\bm{\tau} \otimes \bm{\sigma}$ to start a new cycle and the scalar at the south-east corner should be set to minus one. 
To see this, we escape from state 0 to state $(i^{(1)},j^{(1)})$ with rate $\bm{\tau}_{i^{(1)}}\bm{\sigma}_{j^{(1)}}$ which explains the term $\bm{\tau} \otimes \bm{\sigma}$ and then the row sum of the last row is set to zero by this choice of the south-east corner. 
Due to the sample cycle arguments described above, the steady-state solution of ${\bm X(t)}$ enables one to obtain the exact distributions of the AoI and PAoI processes as stated in the next theorem. 
\begin{theorem}
	\label{bufferlesstheorem}
	Consider the $GMFQ(Q,\tilde{Q},R)$ with the characterizing matrices as defined in \eqref{Char1} with $f_s(x), s \in {\mathcal S}$ being the steady-state joint pdf for the $GMFQ(Q,\tilde{Q},R)$ and can be found using Alg.~1 and Lemma~\ref{householder} stemming from the structure of $Q$ and $R$. Then, the AoI and PAoI processes are ME-distributed and their pdfs, denoted by $f_{\Delta}(x)$ and $f_{\Phi}(x)$, respectively, for the $PH/PH/1/1/P(p)$ queue, are given in terms of $f_s(x)$ as follows:
	\begin{equation}
	f_{\Delta}(x)=\frac{\sum\limits_{s \in {\mathcal S}_2 \bigcup{\mathcal S}_3} f_s(x)}{\int\limits_{0}^{\infty}\sum\limits_{s \in {\mathcal S}_2 \bigcup{\mathcal S}_3} f_s(x^{\prime}) \: dx^{\prime} }, \quad
	f_{\Phi}(x)=\dfrac{\sum\limits_{i=1}^k \sum\limits_{j=1}^{\ell} f_{(i^{(3)},j^{(3)})}(x) \bm{\nu}_j}
	{\int\limits_{0}^{\infty} \sum\limits_{i=1}^k \sum\limits_{j=1}^{\ell} f_{(i^{(3)},j^{(3)})}(x^{\prime}) \bm{\nu}_j \: dx^{\prime}}, \ x \geq 0.
	\label{expression1}
	\end{equation}
\end{theorem}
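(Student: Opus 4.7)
The plan is to verify the two identities in \eqref{expression1} by formalising the sample-cycle correspondence argued around Fig.~\ref{lcfs}, invoking Theorem~\ref{beginningtheorem} for the PAoI while running a direct ergodic argument for the AoI, and finally reading off the matrix-exponential form from \eqref{meform} together with Lemma~\ref{lemma1}.

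For the AoI identity, the key step is to show that the sub-process of $X_f(t)$ restricted to time instants with $X_m(t)\in{\mathcal S}_2\cup{\mathcal S}_3$ has the same joint law as one full AoI sample cycle. By construction, phase~2 begins at the system time $D_j$ of a successful packet (driven by the $(\bm{\tau},T)$ and $(\bm{\sigma},S)$ characterizations of the interarrival and service times), phase~3 continues until the next successful reception at $\Phi_{j+1}$, with the extra term in $Q_{33}$ correctly modelling intermediate preemption resets, and the fluid trajectory coincides pointwise with $\Delta(\cdot)$ throughout. Combining this sample-cycle identification with the regenerative structure of ${\bm X}(t)$ (each visit to state~$0$ is a regeneration epoch) and ergodicity, the stationary AoI cdf equals the conditional cdf of $X_f(t)$ given $X_m(t)\in{\mathcal S}_2\cup{\mathcal S}_3$; differentiating yields the first expression in \eqref{expression1}.

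For the PAoI identity I would apply Theorem~\ref{beginningtheorem} with ${\mathcal S}_0={\mathcal S}_3$ and target state $j=0\in{\mathcal S}_4$. Inspection of \eqref{Char1} shows $Q_{(i^{(3)},j^{(3)}),0}=\bm{\nu}_{j}$, since the only transitions into ${\mathcal S}_4$ out of ${\mathcal S}_3$ are the service completions that trigger a peak. Consequently the fluid level just before such a transition equals a PAoI sample $\Phi_{j+1}$, and Theorem~\ref{beginningtheorem} returns the second expression in \eqref{expression1} with indicator vector $\bm{\eta}$ supported on ${\mathcal S}_3$ and entries $\bm{\nu}_j$. The ME property of both $\Delta$ and $\Phi$ then follows: each numerator has the form $\bm{g}e^{Ax}\bm{h}$ with $A$ stable by \eqref{meform}, and Lemma~\ref{lemma1} certifies the normalised ratio as a legitimate ME density; the ``no mass at zero'' hypothesis of Theorem~\ref{beginningtheorem} is automatic because Alg.~1 places $\bm{c}$ only on state~$0\in{\mathcal S}_4$.

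The main obstacle I foresee is the rigorous treatment of preemption-only cycles (such as the two direct phase-1 to phase-4 transitions in Cycle~1 of Fig.~\ref{lcfs}), which spend time in ${\mathcal S}_1\cup{\mathcal S}_4$ but never visit ${\mathcal S}_2\cup{\mathcal S}_3$; one must check that conditioning on ${\mathcal S}_2\cup{\mathcal S}_3$ exactly removes these ``wasted'' cycles from the time average without distorting the inter-success statistics. This reduces to verifying that the block structure of $Q$ and $\tilde Q$ in \eqref{Char1} sends preempted first packets directly to state~$0$ with the correct transition rates and restarts each new cycle with the prescribed initial joint phase $\bm{\tau}\otimes\bm{\sigma}$, which is a bookkeeping exercise given the explicit forms of $Q$ and $\tilde Q$.
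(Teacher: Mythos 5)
Your proposal is correct and follows essentially the same route as the paper: a sample-path/restriction argument over ${\mathcal S}_2\cup{\mathcal S}_3$ for $f_{\Delta}$, an application of Theorem~\ref{beginningtheorem} with ${\mathcal S}_0={\mathcal S}_3$ and target state $0$ (using $Q_{(i^{(3)},j^{(3)}),0}=\bm{\nu}_j$) for $f_{\Phi}$, and the matrix-exponential form from Step~3 of Alg.~1 for the ME conclusion. Your explicit handling of the preemption-only cycles and your invocation of Lemma~\ref{lemma1} to certify the ME property are somewhat more careful than the paper's own terse argument, but they do not constitute a different approach.
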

\begin{proof}
	The proof follows sample path arguments. The expression for $f_{\Delta}(x)$ stems from the observation that the restriction of the $GMFQ(Q,\tilde{Q},R)$ to the states in $\mathcal{S}_2$
	and $\mathcal{S}_3$, as indicated by the blue-dashed and green-dotted parts of the curve in Fig.~\ref{lcfs}, comprises the cycles of the actual AoI process. On the other hand, $f_{\Phi}(x)$ amounts to the pdf of the fluid level just at the epoch of transitions to state 0 from the states in $\mathcal{S}_3$ and is therefore given by the expression in \eqref{expression1}  using Theorem~\ref{beginningtheorem}. The ME-distributed nature of the stationary AoI and PAoI is an immediate result of Step~3 of Alg.~1 as well as Theorem~1.
\end{proof}
\subsection{Single Buffer Queues}
In this subsection, we study the $M/PH/1/2/R(r)$ queue model without preemption, and with a waiting room for one single
packet only. An information packet joins the queue upon arrival if the buffer is empty but the packet in the buffer is to be replaced  with probability $r$, $0 \leq r \leq 1$, by the new incoming information packet arrival.  
We assume the interrarrival time is exponentially distributed with parameter $\lambda$ and the service time $\Theta\sim PH(\bm{\sigma},S)$ with order $\ell$ and $\bm{\nu} = -S{\bm 1},\sigma_0=1-\bm{\sigma} {\bm 1}=0$.
\begin{figure}[tb]
	\centering
	\includegraphics[width=0.8\linewidth]{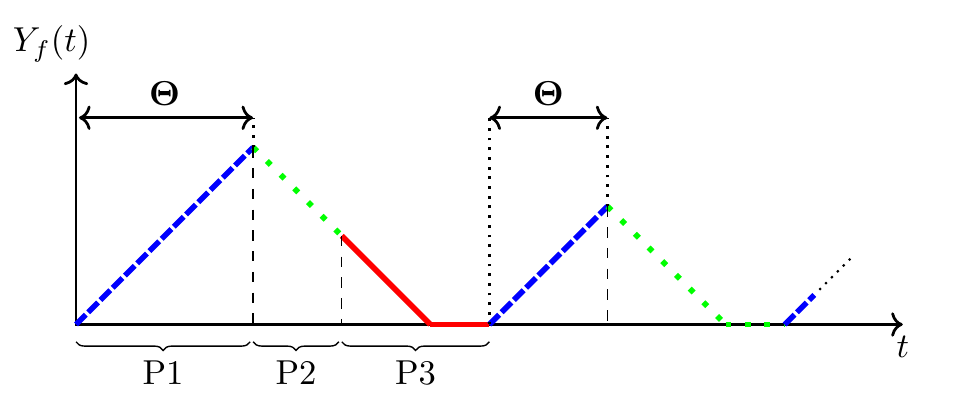}
	\caption{Sample path of the fluid level process $Y_f(t)$ corresponding to the residual service time.  $Pj$ denotes phase $j$, $j=1,\ldots,3$.}
	\label{unfinished}
\end{figure}
In order to derive the pdf of the queue wait time $W$ for successful packets, denoted by $f_W(x), x \geq 0$, we introduce a new fluid level process $Y_f(t)$ a sample path of which is given in Fig.~\ref{unfinished} that is related to the residual service time. For this purpose, let us assume a packet arrival at $t=0$ to an empty queue. Upon the arrival, the fluid process starts to increase at a unit rate for a duration that equals to the service time. This behavior is said to occur in phase 1 (shown by the dashed blue curve). Once the service time is over, a transition to phase 2 (shown by the dotted green curve) occurs in which the fluid level is allowed to decrease with a rate of minus one. If a new arrival occurs in phase 2 when $Y_f(t) >0$, we transition to phase 3 (shown by the solid red curve) during which $Y_f(t)$ is reduced at a unit rate irrespective of new arrivals. If we hit zero without transitioning to phase 3, with a new arrival, a transition from phase 2 to phase 1 occurs. When we hit zero at phase 3, we stay at level 0 for an exponentially distributed duration of time with unit parameter after which we transition to phase 1. If we exclude phase 1 and part of phase 3 corresponding to $Y_f(t)=0$ in this sample path, what remains is a sample path for the residual service time, i.e., the time needed to drain the packet in service, for both queues. The residual service time is zero if there is no packet in service.

Actually, $Y_f(t)$ is the fluid level subprocess of a GMFQ process ${\bm Y(t)}=(Y_f(t),Y_m(t))$ where
$Y_m(t) \in \mathcal{S}=\bigcup_{n=1}^3\mathcal{S}_n$ where 
$ \mathcal{S}_1 = \{  1^{(1)},\ldots,\ell^{(1)}\}$,
$\mathcal{S}_2 = \{ 0\}$,  and
$\mathcal{S}_3 = \{ 1 \}$.
With the enumeration of the states from $\mathcal{S}_1$ to $\mathcal{S}_3$ as before,
the proposed GMFQ is characterized with the triple $(Q,\tilde{Q},R)$ of order $ \ell +2$ where
\begin{equation}
Q =
\left[
\begin{array}{c;{2pt/2pt}c;{2pt/2pt}c}
S & \bm{\nu} & \bm{0}\\ \hdashline[2pt/2pt]
\bm{0}& -\lambda & \lambda \\ \hdashline[2pt/2pt]
\bm{0}& 0 & 0
\end{array}
\right], \quad
\tilde{Q} =   
\left[
\begin{array}{c;{2pt/2pt}c;{2pt/2pt}c}
\bm{0}& \bm{0} & \bm{0} \\ \hdashline[2pt/2pt]
\lambda\bm{\sigma} & -\lambda&{0} \\ \hdashline[2pt/2pt]
\bm{\sigma} & {0}& -1
\end{array}
\right] ,\quad R = \textbf{diag}\{\bm{I}_{l},-\bm{I}_{2} \}.
\end{equation}
Let $f_s(x), s \in {\mathcal S}$ denote the steady-state joint density for the $GMFQ(Q,\tilde{Q},R)$ obtained using the method described in subsection~\ref{mfq}.
Based on Step~3 of Alg.~1, we have the following matrix-exponential form for $f_i(x)$:
\begin{equation}
f_i(x)= \bm{g} e^{Ax} \bm{h}_ i  u(x) + c_i  \delta(x),\ i=0,1,  \label{meform5}
\end{equation}
for a row vector $\bm{g}$ of size $\ell$, a square matrix $A$ of size $\ell$, a column vector $\bm{h}_ i$ of size $\ell$, and two scalars $c_0$ and $c_1$ corresponding to the probability masses at states 0 and 1, respectively. The following theorem gives an expression for $f_W(x)$ for the two queueing systems of interest based on the expression \eqref{meform5}.
\begin{theorem}
	The pdf of the queue wait time $f_W(x)$ for the $M/PH/1/2/R(r)$ queue is given in terms of the matrix parameters of \eqref{meform5} as follows:
	\begin{equation}
	\label{Wexpression}
	f_{W}(x) = \eta_1 \left( \bm{g} e^{(A-r\lambda \bm{I}_{\ell})x}(\bm{h}_ 0+r \bm{h}_ 1) \ u(x) + c_0  \ \delta(x)  \right), 
	\end{equation}
	where $\eta_1=1/\left(-\bm{g}(A-r \lambda {\bm I}_{\ell})^{-1}(\bm{h}_ 0 + r \bm{h}_ 1)+ c_0 \right)$.
\end{theorem}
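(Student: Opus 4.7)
The plan is to apply PASTA to a tagged Poisson arrival, decompose its contribution to the queue wait distribution by a case analysis on the system state seen at arrival, and then insert a thinning argument that accounts for the replacement probability $r$.

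First, I would identify the MFQ quantities with real--queue quantities. By construction of $\mathbf{Y}(t) \sim GMFQ(Q,\tilde{Q},R)$, modulo the auxiliary sojourns (phase 1 and the unit--rate delay at the origin in state $1$), the steady--state scalars and densities of the MFQ encode, up to a common normalization factor, the stationary joint distribution of the residual service time and the buffer occupancy of the actual queue: $c_0$ corresponds to the server--idle probability, $f_0(x) = \bm{g} e^{Ax} \bm{h}_0$ to the joint density of residual service time $x$ and buffer empty while the server is busy, and $f_1(x) = \bm{g} e^{Ax} \bm{h}_1$ to the analogous quantity with a packet in the buffer. Crucially, this identification is independent of $r$: the replacement policy swaps the identity of the buffered packet but leaves the residual service time of the packet currently in service (and hence the MFQ dynamics) unchanged.

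Next, by PASTA the tagged Poisson arrival sees these stationary distributions and I would split into three cases. If the server is idle, the packet enters service with queue wait zero, contributing $c_0 \delta(x)$. If the server is busy with the buffer empty and residual service $x$, the packet joins the buffer, contributing a term proportional to $f_0(x)$. If the server is busy with the buffer full and residual service $x$, the packet is successful only by replacing the occupant, an event of probability $r$, contributing a term proportional to $r f_1(x)$. The \textbf{main subtlety}, and the step I expect to be the heart of the argument, is that in the latter two cases the tagged packet is itself subject to replacement while it waits: replacing arrivals form an independent $r\lambda$--thinned Poisson process that is independent of the in--service packet's residual evolution, so conditional on a residual of $x$ the probability the tagged packet survives to enter service is $e^{-r\lambda x}$.

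Collecting the three contributions yields the unnormalized successful--wait density
\[
c_0 \delta(x) + \bigl(f_0(x) + r f_1(x)\bigr) e^{-r\lambda x} = c_0 \delta(x) + \bm{g}\, e^{(A - r\lambda \bm{I}_\ell)x}(\bm{h}_0 + r \bm{h}_1)\, u(x),
\]
where I use that the scalar matrix $r\lambda \bm{I}_\ell$ commutes with $A$ so that $e^{Ax} e^{-r\lambda x} = e^{(A - r\lambda \bm{I}_\ell)x}$. Finally I would normalize: since $A$ is stable by construction of the GMFQ steady--state solution in Alg.~1, the shifted matrix $A - r\lambda \bm{I}_\ell$ remains stable for $r,\lambda \geq 0$, so $\int_0^{\infty} \bm{g}\, e^{(A - r\lambda \bm{I}_\ell)x}(\bm{h}_0 + r \bm{h}_1)\,dx = -\bm{g}(A - r\lambda \bm{I}_\ell)^{-1}(\bm{h}_0 + r \bm{h}_1)$. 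The reciprocal of the resulting total mass $c_0 - \bm{g}(A - r\lambda \bm{I}_\ell)^{-1}(\bm{h}_0 + r \bm{h}_1)$ is precisely the scalar $\eta_1$ in the theorem, completing the derivation of \eqref{Wexpression}.
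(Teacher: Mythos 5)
Your proposal is correct and follows essentially the same route as the paper's proof: censoring the auxiliary MFQ states to recover the stationary joint law of the residual service time and buffer occupancy, applying PASTA with the case split on idle/busy-empty/busy-full (the latter weighted by $r$), multiplying by the survival factor $e^{-r\lambda x}$ against the $r\lambda$-thinned replacing stream, and using the commutation of the scalar exponential with $e^{Ax}$ before normalizing by the success probability. The paper states the survival and replacement conditions more tersely, but the decomposition, the key thinning argument, and the normalization constant $\eta_1$ are identical to yours.
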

\begin{proof}
	Let $\tilde{f}_i(x), \ i=0,1$ denote the joint pdf of the residual service time and  there are $i$ information packets in the queue.
	We censor out the states in $\mathcal{S}_1$ and the probability mass at zero for state 1 from the original GMFQ to obtain the following expression for $\tilde{f}_i(x)$:
	\begin{equation*}
	\tilde{f}_0(x) = {\eta} \left( \bm{g} e^{Ax} \bm{h}_ 0   \ u(x) + c_0 \ \delta(x) \right), \quad \tilde{f}_1(x)= {\eta} \bm{g} e^{Ax} \bm{h}_ 1 \  u(x), \;x \geq 0,
	\end{equation*}
	where ${\eta} = 1/\left( -\bm{g}A^{-1}(\bm{h}_ 0 + \bm{h}_ 1) + c_0 \right)$. 
	For the $M/PH/1/2/R(r)$ queue, a successful 
	packet arrival occurs when the following conditions hold: (i) there are no information
	packets in the queue or existing packet is replaced with this new arrival, and (ii) this packet arriving when
	the residual service time equals $x \geq 0$, will not be replaced with
	another packet which occurs with probability $e^{-r \lambda x}$. 
	Therefore, for the $M/PH/1/2/R(r)$ queue, \[
	f_W(x)=  \frac{e^{-r\lambda x} }{\pi_s} \left( \tilde{f}_0(x) + r \tilde{f}_1(x)
	\right), \ x\geq 0,
	\] 
	where $\pi_s = \int_{-\infty}^{\infty} (\tilde{f}_0(x) + r \tilde{f}_1(x)) e^{-r \lambda x} dx$ is the success probability which consequently yields the expression \eqref{Wexpression} completing the proof.
\end{proof}
\begin{remark}
	The scalar term $e^{-r\lambda x}$ in the above proof arising when arrivals are Poisson, commutes with the matrix $e^{Ax}$ giving rise to the simple expression in Theorem~3. This is the main reason that hindered us from using more general PH-type interarrivals for single-buffer systems in which case such commutativity would not hold.
\end{remark}
The pdf given in the identity \eqref{Wexpression} is in matrix exponential form. Therefore, $W$ can be shown to be ME-distributed and $W \sim ME(\bm{\beta} ^{\ast}, B^{\ast})$ of order $\ell$ for some vector $\bm{\beta} ^{\ast}$ and matrix $B^{\ast}$ using the construction method described in Lemma 1. 
Let us first assume $W \sim PH(\bm{\beta} ,B)$ with $\bm{\psi} = -B{\bm 1}, {\beta}_0=1-\bm{\beta}  {\bm 1}$ with order $\ell$. 
Note that computational techniques and algorithms using PH-type distributions can be extended to ME distributions as well, as shown in various studies; see \cite{buchholz_telek_peva10} and the references therein. Therefore,  
after developing a numerical algorithm for obtaining $f_{\Delta}(x)$ and $f_{\Phi}(x)$ under the assumption $W \sim PH(\bm{\beta} ,B)$, we will then substitute $(\bm{\beta} ^{\ast}, B^{\ast})$ in place of $(\bm{\beta} ,B)$ 
in the numerical algorithm when $W \sim ME(\bm{\beta} ^{\ast}, B^{\ast})$.

A sample path of the fluid level process $Z_f(t)$ we propose to use for the 
$M/PH/1/2/R(r)$ queue is depicted in Fig.~\ref{mph12}. 
Similar to the $PH/PH/1/1/P(p)$ case, we have a single fluid level trajectory of infinitely many cycles where each cycle begins with the arrival of a successful information packet into the system and ends with the reception of the next successful information packet at the destination.
Let us assume that at $t=0$, a successful information packet arrival has just occured. In phase 1 (shown by the solid blue curve), the fluid level $Z_f(t)$ increases at a unit rate until its queue wait time $W$ is over. Once phase 1 is over, phase 2 starts (shown by the dashed red curve) which lasts until either a new packet arrival occurs or until the service time of the current packet is over. In the former case, we transition into phase 3 (shown by the solid black curve) during which we wait for the end of the service time before transitioning to phase 4. In the latter case, we transition from phase 2 to directly phase 4 (shown by the dotted green curve) in which case it lasts until a new packet arrival. 
In phase 5 (shown by the solid brown curve), the fluid level continues to increase at a unit rate until the service time of the new packet arrival is over. When phase 5 terminates, the system transitions to phase 6 in which the fluid level is brought down to zero with a drift of minus one after which the fluid level stays at level 0 for an exponentially distributed duration of time with unit parameter.
Similar to the sample path observations made for the bufferless system,
the parts of the curves from the beginning of phase 4 until the end of phase 5 in each cycle in Fig.~\ref{mph12} is also a sample cycle of the AoI process of Fig.~\ref{fig:SamplePath} for the single buffer queueing system, and the fluid level just at the beginning of phase 6 in Fig.~\ref{mph12} is also a sample value of the PAoI process of Fig.~\ref{fig:SamplePath}, and vice versa. Actually, $Z_f(t)$ is the fluid level sub-process of a GFMQ ${\bm Z(t) = (Z_f(t),Z_m(t))}$ where
$Z_m(t) \in \mathcal{S}=\bigcup_{n=1}^6\mathcal{S}_n$ where the set $\mathcal{S}_n$ refers to the collection of states to be used for phase $n$. In particular,
$\mathcal{S}_m=\{ i^{(m)} \},\ i^{(m)}=1,\ldots,\ell, \ m=1,2,3,5$,  
$\mathcal{S}_4 =\{ 1 \}$, and  
$\mathcal{S}_6 = \{ 0 \}$. 
\begin{figure}[tb]
	\centering
	\includegraphics[width=0.9\linewidth]{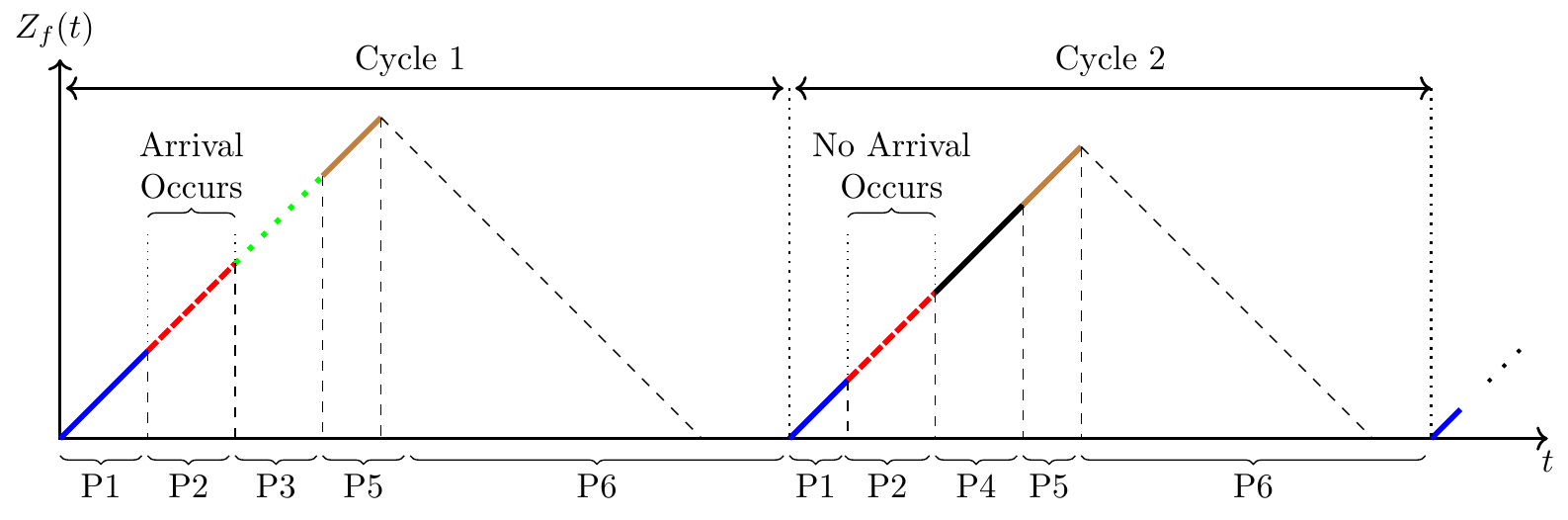}
	\caption{Sample path of the fluid level process $Z_f(t)$ where $Pj$ denotes phase $j$, $j=1,\ldots,6$.}
	\label{mph12}
\end{figure}
With the enumeration of the states from $\mathcal{S}_1$ to $\mathcal{S}_6$ as before,
the proposed GMFQ is characterized with the triple $(Q,\tilde{Q},R)$ of order $ 4 \ell +2$ where
\begin{equation}
Q =
\left[
\begin{array}{c;{2pt/2pt}c;{2pt/2pt}c;{2pt/2pt}c;{2pt/2pt}c;{2pt/2pt}c}
B & \bm{\psi} \otimes \bm{\sigma} & \bm{0} &\bm{0} &\bm{0} & \bm{0}\\ \hdashline[2pt/2pt] 
\bm{0}& S-\lambda {\bm I_{\ell}} & \lambda {\bm I_{\ell}} & \bm{\nu}&\bm{0}&\bm{0}\\ \hdashline[2pt/2pt]
\bm{0}&\bm{0}& S &\bm{0}& \bm{\nu} \otimes \bm{\sigma} &\bm{0}\\ \hdashline[2pt/2pt]
\bm{0}& \bm{0}&\bm{0}&-\lambda & \lambda \bm{\sigma} & 0\\ \hdashline[2pt/2pt]
\bm{0}& \bm{0}&\bm{0}& \bm{0}& S & \bm{\nu} \\ \hdashline[2pt/2pt]
\bm{0}&\bm{0}& \bm{0}&0& \bm{0}& 0
\end{array}
\right], \ R =\textbf{diag}\{\bm{I}_{4 \ell+1},-1 \},
\label{Qnail}
\end{equation}
and $\tilde{Q}$ is a matrix of zeros except for the $1 \times 2\ell$ block at the south-west corner which should be set to $\left[ \bm{\beta} , {\beta}_0 \bm{\sigma} \right]$ and the scalar at the south-east corner should be $-1$.
Let $f_s(x), s \in {\mathcal S}$ denote the steady-state joint density for the $GMFQ(Q,\tilde{Q},R)$. The expressions for the pdfs of the AoI and PAoI processes 
are given in the following theorem, a proof of which is similar to that of Theorem~2 and is omitted.
\begin{theorem}
	\label{buffertheorem}
	Consider the process $\bm{Z(t)} \sim GMFQ(Q,\tilde{Q},R)$ with the characterizing matrices as defined in \eqref{Qnail} with $f_s(x), s \in {\mathcal S}$ denoting 
	the associated steady-state joint pdf. The pdfs of the steady-state AoI and PAoI, 
	denoted by $f_{\Delta}(x)$ and $f_{\Phi}(x)$, respectively, for the $M/PH/1/2/R(r)$ queue, are in matrix exponential form and are given in terms of $f_s(x)$ as follows:
	\begin{equation}
	f_{\Delta}(x)=\frac{\sum\limits_{s \in {\mathcal S}_4 \bigcup {\mathcal S}_5} f_s(x)}
	{\int\limits_{0}^{\infty}\sum\limits_{{\mathcal S}_4 \bigcup {\mathcal S}_5}  f_s(x^{\prime}) dx^{\prime}}, \quad
	f_{\Phi}(x)=\dfrac{ \sum\limits_{j=1}^{\ell} f_{j^{(5)}}(x) \bm{\nu}_j}
	{\int\limits_{0}^{\infty}  \sum\limits_{j=1}^{\ell} f_{j^{(5)}}(x^{\prime}) \bm{\nu}_j dx^{\prime} }, \ x\geq 0.
	\label{expressionnail}
	\end{equation}
\end{theorem}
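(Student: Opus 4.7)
The plan is to mirror the sample-path argument used for Theorem~\ref{bufferlesstheorem}, adapted to the six-phase structure of the $M/PH/1/2/R(r)$ GMFQ. The first step is to verify the pointwise correspondence between the fluid trajectory $Z_f(t)$ and the AoI sample path, cycle by cycle. Since phase~1 has duration $W_j$ and phases 2 and 3 together have duration $\Theta_j$, the fluid level at the entrance to phase 4 (or equivalently, at the entrance to phase 5 when phase 4 is skipped via the direct $\mathcal{S}_3 \to \mathcal{S}_5$ transition) is exactly $D_j = W_j + \Theta_j$. From there, the trajectory proceeds at unit slope throughout phases 4 and 5 for a total duration equal to $\delta_{j+1}-\delta_j$: in the case where phase 4 is visited, it comprises the queue-empty interarrival gap plus $\Theta_{j+1}$; in the case where the next successful packet was already waiting, only phase 5 of duration $\Theta_{j+1}$ is traversed. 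In both cases the fluid level at the end of phase 5 equals $D_j + (\delta_{j+1}-\delta_j) = \Phi_{j+1}$, and the restricted trajectory reproduces an AoI cycle verbatim.

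Given this sample-cycle equivalence, I would derive the expression for $f_{\Delta}(x)$ by time-averaging the stationary joint density over $\{Z_m(t)\in\mathcal{S}_4\cup\mathcal{S}_5\}$. Because every GMFQ cycle produces exactly one phase-4/phase-5 fragment and that fragment coincides with one AoI cycle, the time spent by $Z_f$ at level $\leq x$ with $Z_m \in \mathcal{S}_4\cup\mathcal{S}_5$, normalized by the total time in these phases, equals the time fraction spent by the true AoI process below level $x$. This yields precisely
\begin{equation*}
f_{\Delta}(x) = \frac{\sum_{s\in\mathcal{S}_4\cup\mathcal{S}_5} f_s(x)}{\sum_{s\in\mathcal{S}_4\cup\mathcal{S}_5}\int_0^{\infty} f_s(x')\,dx'}.
\end{equation*}
The loss of ordering between consecutive GMFQ cycles and consecutive AoI cycles is harmless because only the marginal distribution is at stake, in line with the discussion following Fig.~\ref{fig:SamplePath}.

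For $f_{\Phi}(x)$, the plan is to invoke Theorem~\ref{beginningtheorem} with $\mathcal{S}_0 = \mathcal{S}_5$ and target state $j=0\in\mathcal{S}_6$, since exactly one transition $\mathcal{S}_5 \to \{0\}$ occurs per GMFQ cycle and the fluid level just before that transition is a PAoI sample $\Phi_{j+1}$. Reading off the final column of $Q$ in \eqref{Qnail}, one has $Q_{j^{(5)},0} = \bm{\nu}_j$, so substitution into \eqref{beginning_expression} gives the stated formula. It remains to check that $\mathcal{S}_5$ states carry no probability mass at zero, which holds because $Z_f$ strictly increases at unit rate throughout phases 1--5 and can only reach zero in phase 6; this is the hypothesis of Theorem~\ref{beginningtheorem}. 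The matrix-exponential form of both $f_{\Delta}$ and $f_{\Phi}$ then follows immediately from the form \eqref{meform} provided by Step~3 of Alg.~1 combined with Lemma~\ref{lemma1}, so that $\Delta$ and $\Phi$ are ME-distributed.

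The main obstacle I foresee is not mechanical but expository: carefully justifying that the auxiliary phases 1, 2, 3 and 6, which have no counterpart in the real AoI trajectory and which impose an artificial reset between cycles, do not pollute the conditional marginals once we restrict to $\mathcal{S}_4 \cup \mathcal{S}_5$ and to the transition into $\mathcal{S}_6$. This hinges on the one-to-one pairing between GMFQ cycles and true AoI cycles, together with the observation --- already exploited in Theorem~\ref{bufferlesstheorem} --- that the ordering of the AoI and PAoI samples is immaterial for computing their stationary marginals.
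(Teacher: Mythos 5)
Your proposal is correct and follows essentially the same route as the paper, whose proof of this theorem is omitted with a pointer to the sample-path argument of Theorem~2: you establish the cycle-by-cycle identification of the phase-4/phase-5 fragment with an AoI cycle (correctly handling the skipped-phase-4 case via the direct $\mathcal{S}_3\to\mathcal{S}_5$ transition), obtain $f_{\Delta}$ by conditioning the stationary joint density on $\mathcal{S}_4\cup\mathcal{S}_5$, obtain $f_{\Phi}$ from Theorem~\ref{beginningtheorem} with $\mathcal{S}_0=\mathcal{S}_5$ and target state $0$ using $Q_{j^{(5)},0}=\bm{\nu}_j$, and get the ME form from Step~3 of Alg.~1 together with Lemma~\ref{lemma1}. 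No gaps.
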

\section{Numerical Examples}
In order to make the numerical experiments reproducible, we made our Matlab implementation available at MATLAB File Exchange with the title
\href{https://www.mathworks.com/matlabcentral/fileexchange/74923-markov-fluid-queue-solver-for-age-of-information-dist}{``Markov Fluid Queue Solver for Age of Information Distribution"}.

\subsection{Validation with Simulations}
In the first example, 
we assume Poisson arrivals with rate $\lambda$ and fix $p=0$ or $p=1$ for the bufferless system and $r=0$ or $r=1$ for the single-buffer system, and then obtain the steady-state distributions of the AoI and the PAoI of the underlying four queueing systems, namely $M/PH/1/1$, $M/PH/1/1^{\ast}$, $M/PH/1/2$, 
and $M/PH/1/2^{\ast}$ queues, and compare with simulations.
For the service times, we use a PH-type distribution with mean $\rho/\lambda$ for a given system load $\rho$ and scov of the service time is to be fixed to a given value $c_{\Theta}^2$ according the following procedure: 
For $c_{\Theta}^2=1/j \leq 1$ for a positive integer $j$, $E(\mu^{-1},j)$ distribution is used which refers to an Erlang distribution with mean $\mu^{-1}$ and with order $j$. If $j$ is not an integer, then we resort to a mixture of two appropriate Erlang distributions \cite{tijms_book03}. When $c_{\Theta}^2 > 1$, then we propose to use a hyper-exponential distribution with balanced means to fit the first two moments \cite{tijms_book03}. 
Fig.~\ref{fig:fig6} depicts 
the cdfs of the AoI and PAoI processes obtained with the proposed analytical model as well as simulations, for four different values of the parameter pair $(\rho,c_{\Theta}^2)$, and for the four queueing models and perfect match with the simulation results
is obtained in all cases. 
\begin{figure*}[t]
	\centering
	\includegraphics[width= \linewidth]{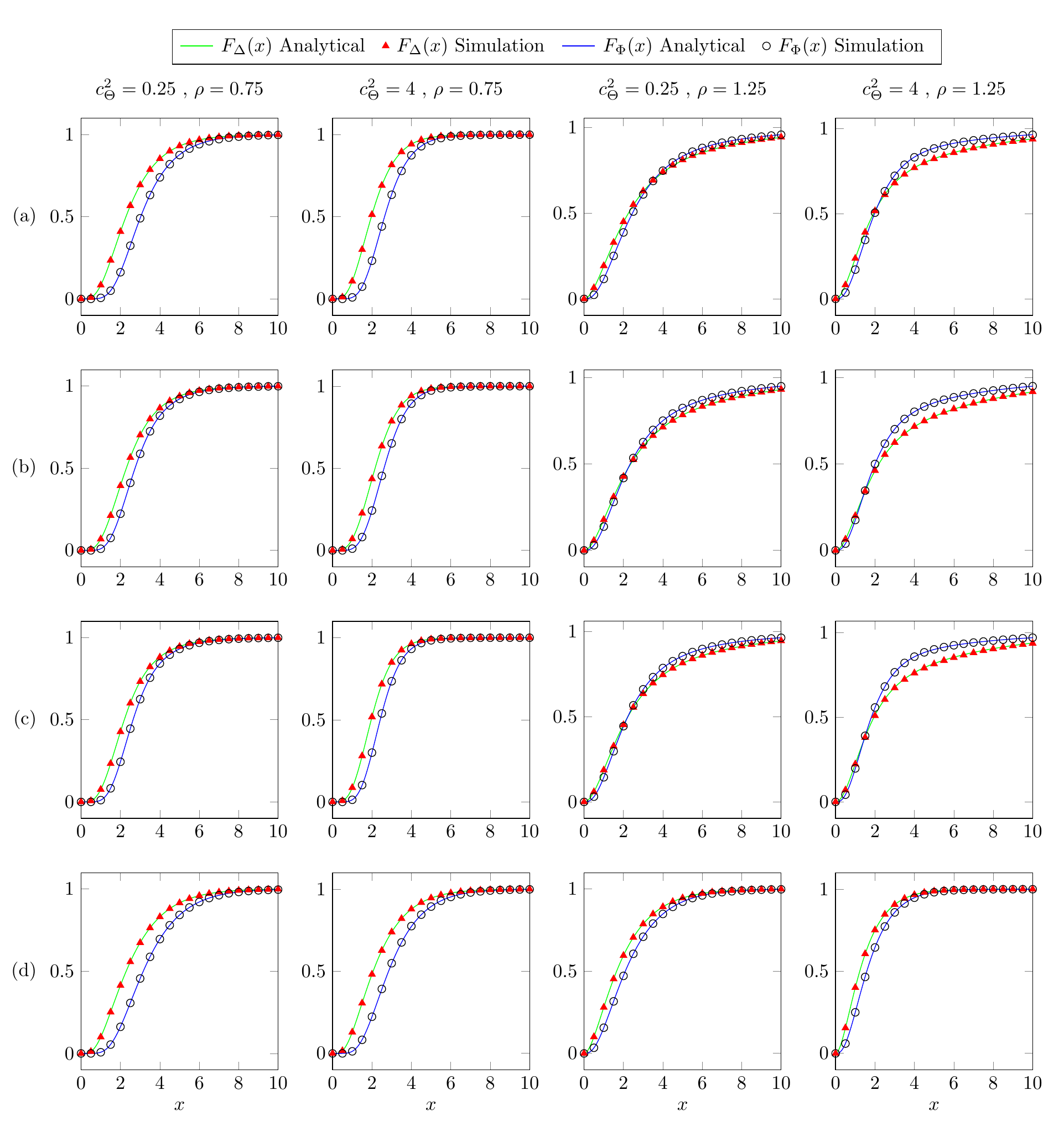}
	\caption{The cdfs of the AoI and PAoI processes obtained with the 
		proposed method and simulations for the 
		scenarios $\rho=0.75,1.25$, $c_{\Theta}^2=0.25,4$ for
		(a) $M/PH/1/1$ (b) $M/PH/1/1^{\ast}$ (c) $M/PH/1/2$ 
		(d) $M/PH/1/2^{\ast}$ queueing models.}
	\label{fig:fig6}
\end{figure*} \par
As a second example, we allow PH-type information packet arrivals for the bufferless case and the same construction of the arrival process is used as in the previous example to match $\lambda$ and $c_{\Lambda}^2$.
The cdfs of the AoI and PAoI processes are plotted in Fig.~\ref{fig:fig7} using 
the proposed analytical model and simulations for $\rho=0.75,1.25$, $c_{\Lambda}^2=0.25,4$ 
while fixing $c_{\Theta}^2=0.2$ for the two queueing models 
$PH/PH/1/1$ and $PH/PH/1/1^{\ast}$.
The results obtained by the 
analytical method perfectly match those obtained by simulations.
\begin{figure*}[t]
	\centering
	\includegraphics[width=\linewidth]{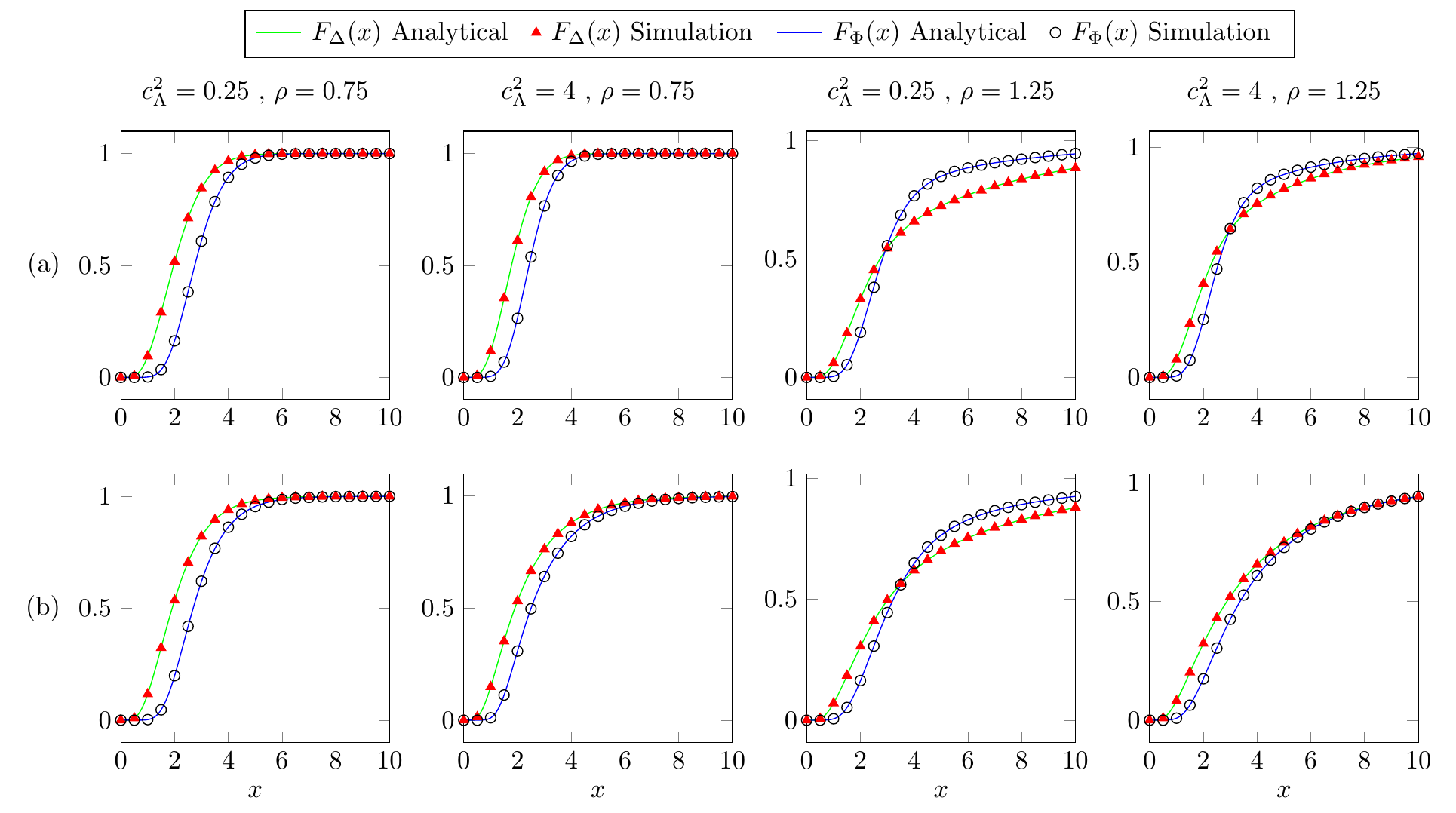}
	\caption{The cdfs of the AoI and PAoI processes when 
		$c_{\Theta}^2=0.2$, $\rho=0.75,1.25$, $c_{\Lambda}^2=0.25,4$ for the 
		(a) $PH/PH/1/1$ (b) $PH/PH/1/1^{\ast}$ queueing models.}
	\label{fig:fig7}
\end{figure*}

As a final example of this subsection, the mean AoI, $E[\Delta]$, is plotted in Fig.~\ref{fig:fig8} (obtained with both the proposed method and simulations which perfectly match) with respect to the packet preemption probability $p$ 
for the $M/PH/1/1/P(p)$ queue and with respect to the packet replacement probability $r$ for the $M/PH/1/2/R(r)$ queue, each queue with varying values of $\rho$  and $c_{\Theta}^2$. From the examples we studied, we observe that there is an optimal choice of the preemption probability (not necessarily 0 or 1) which minimizes $E[\Delta]$ and this optimal value $p^{\ast}$ depends on the pair $(\rho,c_{\Theta}^2)$. As a general remark, the complete preemption policy $p=1$ turns out to perform relatively poorly for larger values of $\rho$ and smaller values of $c_{\Theta}^2$. On the other hand, as expected, the complete replacement policy $r=1$ gives rise to the minimum $E[\Delta]$ for all the examples we studied.
\begin{figure*}[t]
	\centering
	\includegraphics[width=\linewidth]{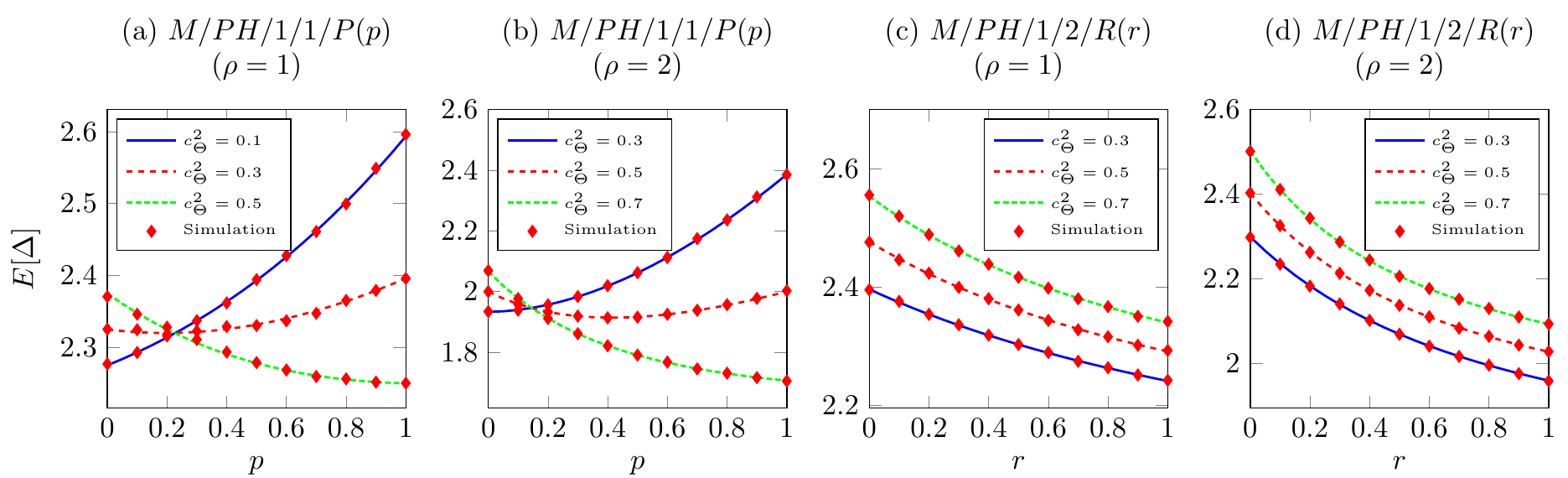}
	\caption{The mean AoI, as a function of the packet preemption probability $p$ for the $M/PH/1/1/P(p)$ queue for three values of $c_\Theta^2$: (a) $\rho=1$ (b)  $\rho=2$,  and as a function of the packet replacement probability
		for the $M/PH/1/2/R(r)$ queue again for three values of $c_\Theta^2$: (c) $\rho=1$ (d) $\rho=2$.}
	\label{fig:fig8}
\end{figure*}
\subsection{Validation with Existing Results}
In addition to comparisons with simulations, we validate the accuracy of our proposed method with a number of closed-form expressions existing in the literature. As an example, closed-form expressions for $E[\Delta]$ and $E[\Delta^2]$ are available in \cite{inoue_etal_tit19} for the $M/GI/1/1^{\ast}$ and $GI/M/1/1^{\ast}$ queues as well as an expression for $E[\Delta]$ for the $M/GI/1/2^{\ast}$ queue.
In Table~\ref{table2}, we tabulate the results up to four fractional digits, obtained with our proposed method and by the closed-form expressions given in \cite{inoue_etal_tit19} for various cases. The results perfectly match in all digits and for all the cases we tried, demonstrating the accuracy of the proposed method.
\begin{table}[t]
	\centering
	\caption{The performance metrics $E[\Delta]$ and $E[\Delta^2]$ obtained with the closed-form expressions in Ref.~\cite{inoue_etal_tit19} and the proposed method for various queueing models and their parameters.}
	\begin{tabular}{cccccc}
		\hline 
		& & \multicolumn{2}{c}{$E[\Delta]$}& \multicolumn{2}{c}{$E[\Delta^2]$}  \\ \hline
		\multicolumn{2}{c}{Queueing Model}& Ref.~\cite{inoue_etal_tit19} & Proposed  & Ref.~\cite{inoue_etal_tit19} & Proposed  \\ \hline
		$M/PH/1/1^{\ast}$ & $\lambda=0.5, \ \Theta \sim E(1,2)$ & 3.1250 &3.1250 & 14.5312 &14.5312 \\ \cline{2-6}
		& $\lambda=0.5, \ \Theta \sim E(1,4)$ & 3.2036 &3.2036 & 14.8310& 14.8310\\ \cline{2-6}
		& $\lambda=1.5, \ \Theta \sim E(1,2)$& 2.0417&2.0417 & 6.0035& 6.0035 \\ \cline{2-6}
		& $\lambda=1.5, \ \Theta \sim E(1,4)$ & 2.3830 &2.3830 & 7.8910 &7.8910 \\ \hline
		$PH/M/1/1^{\ast}$ & $\mu=0.5, \ \Lambda \sim E(1,2)$ & 2.7500 &2.7500 & 12.0000&12.0000 \\ \cline{2-6}
		&  $\mu=1.5, \ \Lambda \sim E(1,2)$ & 1.4167 &1.4167& 2.8889&2.8889 \\ \cline{2-6}
		&  $\mu=0.5, \ \Lambda \sim E(1,4)$& 2.6250 &2.6250 & 11.1250 &11.1250 \\ \cline{2-6}
		&  $\mu=1.5, \ \Lambda \sim E(1,4)$  & 1.2917 &1.2917 & 2.3472 & 2.3472\\ \hline
		$M/PH/1/2^{\ast} $ & $\lambda=0.5, \ \Theta \sim E(1,2)$& 3.1089 &3.1089 & &  \\ \cline{2-4}
		& $\lambda=0.5, \ \Theta \sim E(1,4)$& 3.0786 &3.0786 &  &   \\ \cline{2-4}
		&  $\lambda=1.5, \ \Theta \sim E(1,2)$ & 2.0996 & 2.0996 &  &  \\ \cline{2-4}
		& $\lambda=1.5, \ \Theta \sim E(1,4)$ & 2.0226 & 2.0226 &  &   \\ \cline{1-4}
	\end{tabular}
	\label{table2}
\end{table}

As a second example, we compare in Fig.~\ref{fig:fig9} our proposed analytical method for the $PH/D/1/1$ queue with $\Lambda \sim E(\lambda^{-1},2)$ against
the upper bound proposed by \cite{champati_etal_infocom19} for the age violation probability $G_{\Delta}(x)= 1 - F_{\Delta}(x), \ x\geq 0$. For our method to work, we approximate the deterministic service time by the Erlang distribution $E(\mu^{-1},j)$ with two relatively large values for the order parameter $j=10,100$.
We use the same numerical example of \cite{champati_etal_infocom19} for which in Fig.~\ref{fig:fig9}(a), $G_D(5)$ is plotted with respect $\lambda$ when $\mu=1$ and in Fig.~\ref{fig:fig9}(b),
$G_D(x)$ is plotted with respect to the age limit $x$ when $\lambda=0.45$ and $\mu=1$.
Fig.~\ref{fig:fig9} demonstrates that the Erlang approximation with order $j=100$ employed along with the proposed method
is quite successful, whereas the easy to obtain upper bound is quite loose for this particular example as shown also in \cite{champati_etal_infocom19}.

%
\begin{figure}[t]
	\centering
	\includegraphics[width=0.85\linewidth]{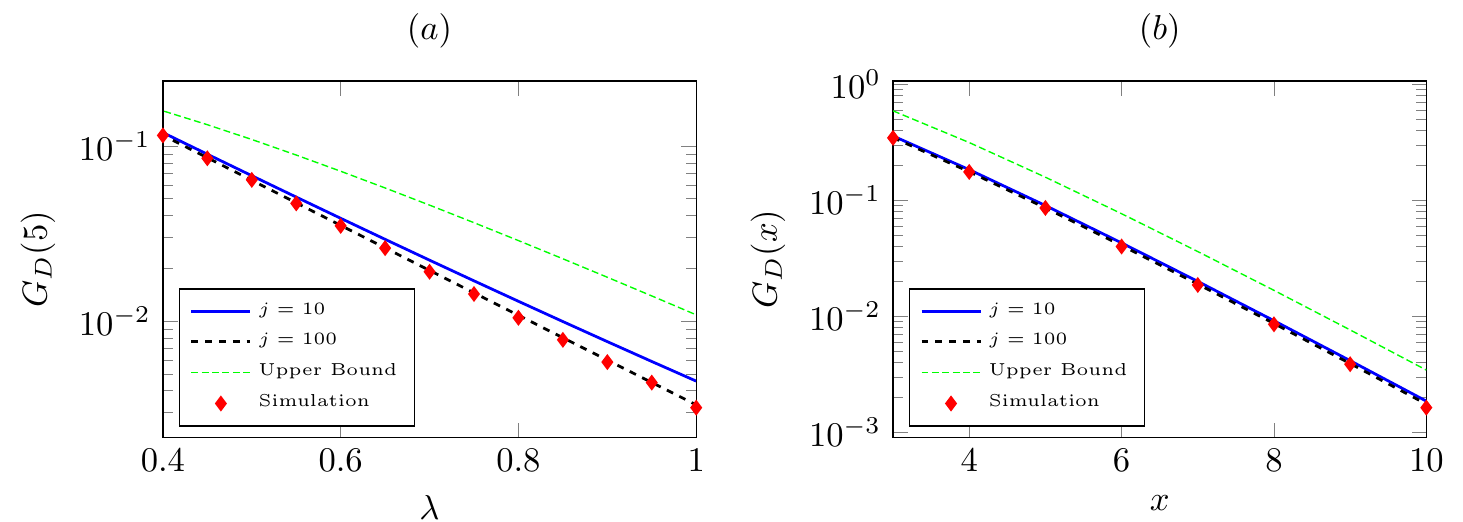}
	\caption{The age violation probability $G_{\Delta}(x)$ (a) with respect to the arrival rate $\lambda$ when $x=5, \ \mu=1$ and 
		(b) with respect to the age limit $x$ when $\lambda=0.45, \ \mu=1$,	obtained by simulations and the proposed method along with upper bound of
		\cite{champati_etal_infocom19}.}
	\label{fig:fig9}
\end{figure}
\subsection{Analytical Results}
In this subsection, only results obtained with the proposed method are reported for the four queueing systems, namely $PH/PH/1/1$, $PH/PH/1/1^{\ast}$, $M/PH/1/2$, $M/PH/1/2^{\ast}$ queues, corresponding to the choices of the preemption parameter $p=0$, $p=1$ and replacement parameter $r=0$, $r=1$, respectively. As the first example of this subsection, the mean AoI and the mean PAoI are 
depicted in Fig.~\ref{fig:fig10} as a function of the scov of the service time for the four queueing models of interest for three different values of the system load, namely $\rho=0.5,1,1.5$, representative of light, critical, and high system load scenarios, respectively. We have the following observations:
\begin{itemize}
	\item The $M/PH/1/1^{\ast}$ model outperforms all the other models in terms of minimizing both $E[\Delta]$ and $E[\Phi]$ for larger values of $c_{\Theta}^2$. However, the $M/PH/1/1^{\ast}$ model performs quite poorly for smaller values of $c_{\Theta}^2$, a situation which is emphasized even more, for larger system loads. 
	This is not surprising since when $c_{\Theta}^2 < 1$, the expected residual service time of the ongoing service time will tend to be smaller than $E[\Theta]$ and the benefit of preemption is to diminish in such situations. 
	\item The PAoI metric $E[\Phi]$ in the $M/PH/1/1$ model exhibits insensitivity to the scov of the service time which is resemblant of the blocking probability in an $M/G/c/c$ system which is dependent on the service time only through its mean and not its higher moments. However, this feature is not inherited in the mean AoI  which appears to increase with increased $c_{\Theta}^2$. 
	\item Both the AoI and PAoI metrics of interest increase with increased values of  $c_{\Theta}^2$ for the single-buffer $M/PH/1/2$ and $M/PH/1/2^{\ast}$ queues for three values of the system load that we studied, whereas the $M/PH/1/2^{\ast}$ system presents consistently better performance since a new information packet is always timelier at the remote server than the one already waiting in the queue.
\end{itemize}
\begin{figure}[t]
	\centering
	\includegraphics[width=1\linewidth]{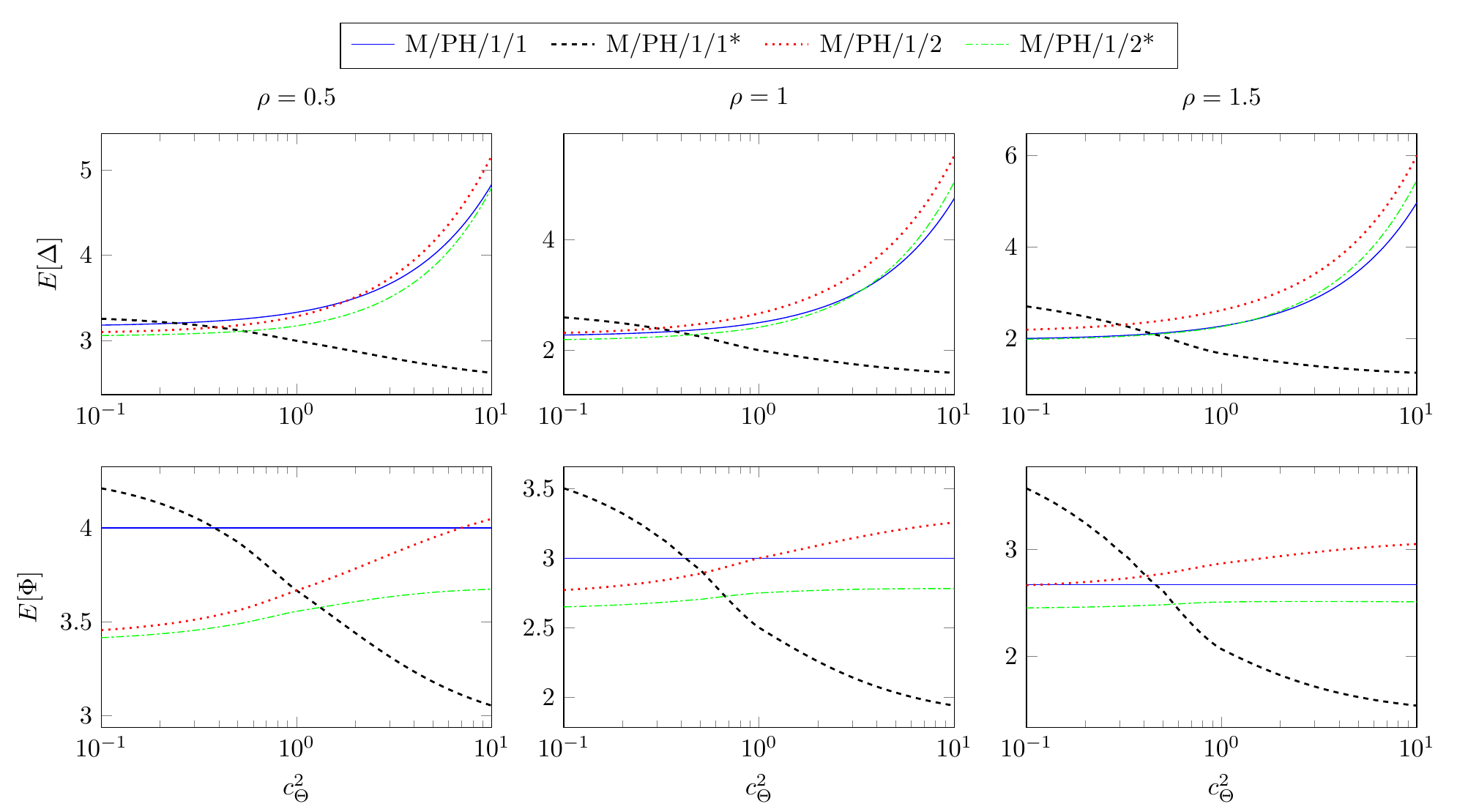}
	\caption{The mean AoI and the mean PAoI as a function of the scov of the
		service time for the four proposed queueing models for three 
		different values of $\rho=0.5,1,1.5$.}
	\label{fig:fig10}
\end{figure} \par
In the second example, the queueing model (out of four) 
which gives rise to the minimum mean AoI is depicted in 
Fig.~\ref{fig:fig11}(a) as a function of the system load $\rho$ and 
scov of the service time $c_{\Theta}^2$. For larger values of 
$c_{\Theta}^2$, the best system is the $M/PH/1/1^{\ast}$ model 
irrespective of the system load. However, for smaller values of 
$c_{\Theta}^2$, the $M/PH/1/1$ model gives the best performance 
for larger values of the system load, whereas it is taken down by the 
$M/PH/1/2^{\ast}$ model for lower values of the system load. 
In some cases, preemption may not be possible since the information 
packet in service may not be under the control of the server once the 
service begins. Consequently, we depict the best queueing model out of 
three models only (when the preemptive $M/PH/1/1^{\ast}$ model is 
excluded) in Fig.~\ref{fig:fig11}(b) which shows that the boundary 
between the queueing models 
$M/PH/1/2^{\ast}$ and $M/PH/1/1$ turn out to depend on the particular value of $c_{\Theta}^2$ when $c_{\Theta}^2 >1$. The $M/PH/1/2$ model does not give rise to the best mean AoI figure in any of these plots since it is always outperformed by $M/PH/1/2^{\ast}$.
\begin{figure}[tb]
	\centering
	\includegraphics[width=0.8\linewidth]{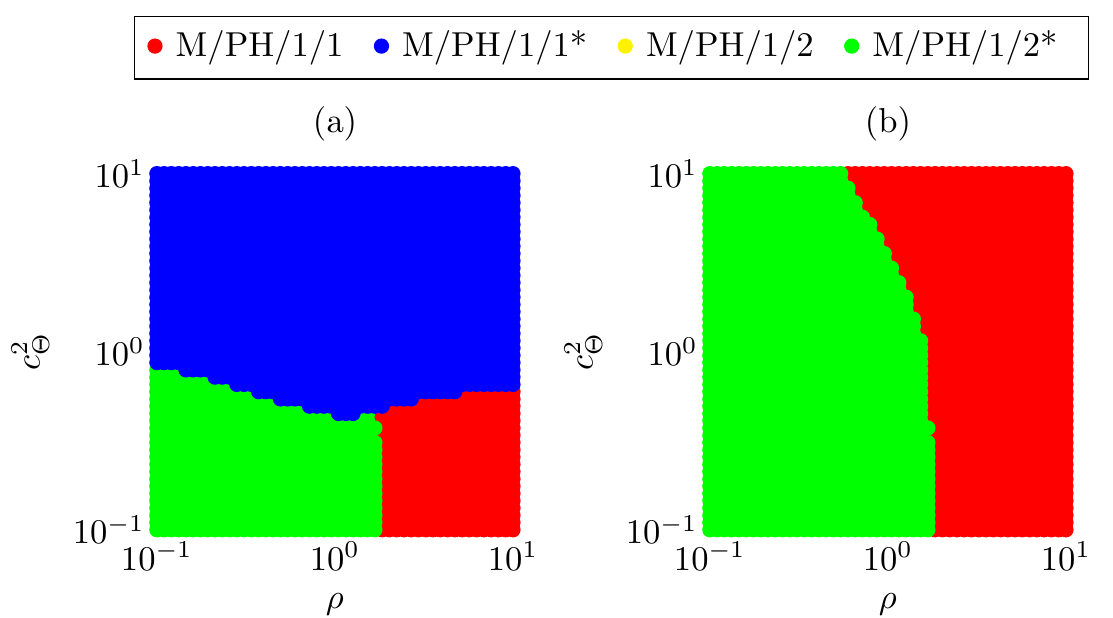}
	\caption{The queueing model which minimizes the mean AoI as a function of
		the system load $\rho$ and scov of the service time $c_{\Theta}^2$ (a) 
		out of all four queueing models (b) out of three queueing models when 
		$M/PH/1/1^{\ast}$ is excluded.}
	\label{fig:fig11}
\end{figure} \par
As a final example, we study the impact of the scov of the interarrival times in the mean AoI. Fig.~\ref{fig:fig12} depicts the mean AoI with respect to varying $c_{\Lambda}^2$
for four different values of  $c_{\Theta}^2$ for a critically loaded system with $\rho=1$.
Both queueing models tend to be affected adversely with increased $c_{\Lambda}^2$, but for larger values of $c_{\Lambda}^2$, the $PH/PH/1/1$ system is penalized more severely.
\begin{figure}[tb]
	\centering
	\includegraphics[width=1\linewidth]{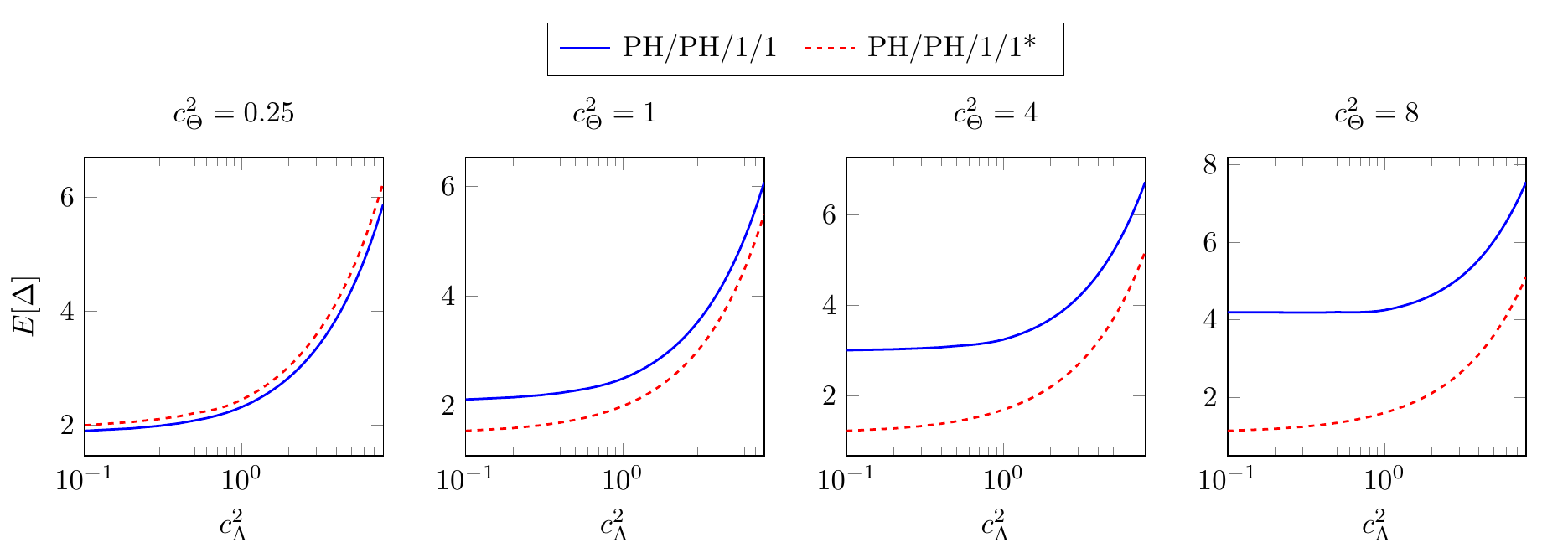}
	\caption{Mean AoI for the $PH/PH/1/1$ and $PH/PH/1/1^{\ast}$ queues with respect to varying $c_{\Lambda}^2$
		for four different values of  $c_{\Theta}^2$ when $\rho=1$.}
	\label{fig:fig12}
\end{figure}

\section{Conclusions}
In this paper, we propose and validate a computationally efficient and stable numerical method for obtaining the exact steady-state distributions of both the AoI and PAoI processes for the bufferless $PH/PH/1/1/P(p)$ queue with probabilistic preemption and the $M/PH/1/2/R(r)$ queue with probabilistic replacement. Age violation probabilities and the associated moments can be calculated easily since the obtained distributions are in matrix exponential form. Cases for which probabilistic preemption is beneficial for bufferless systems with respect to complete preemption and no preemption cases are identified. For single buffer systems, the complete replacement policy is shown to yield the best AoI performance. 
The scov of the service time is shown to play an important role for all the queueing systems of interest on the mean values of AoI and PAoI with the exception that the mean PAoI for the $M/PH/1/1$ system appears to be insensitive to higher order moments of the service time. 
Future work will include extensions of the proposed model to update systems involving multiple sources.

\end{document}